\newcommand{\E}{\mathbb E}
\newcommand{\I}{\mathbbm 1}
\newtheorem{theorem}{Theorem}
\newcommand\independent{\protect\mathpalette{\protect\independenT}{\perp}}
\def\independenT#1#2{\mathrel{\rlap{$#1#2$}\mkern2mu{#1#2}}}
\newcommand{\blind}{1}
\begin{document}

\def\spacingset#1{\renewcommand{\baselinestretch}%
{#1}\small\normalsize} \spacingset{1}

%%%%%%%%%%%%%%%%%%%%%%%%%%%%%%%%%%%%%%%%%%%%%%%%%%%%%%%%%%%%%%%%%%%%%%%%%%%%%%
\date{March 2019}
\if1\blind
{
  \title{\bf Optimal balancing of time-dependent confounders for marginal structural models}
   \author{Nathan Kallus\\
    School of Operations Research and Information Engineering and \\
Cornell Tech, Cornell University, New York, New York 10044\\
\\
    Michele Santacatterina\thanks{
    Corresponding author. This material is based upon work supported by the National Science Foundation under Grants Nos. 1656996 and 1740822.}\hspace{.2cm}\\
    TRIPODS Center for Data Science for Improved Decision Making \\
    and Cornell Tech, Cornell University, New York, New York, 10044}
  \maketitle
} \fi

\if0\blind
{
  \bigskip
  \bigskip
  \bigskip
  \begin{center}
    {\LARGE\bf Title}
\end{center}
  \medskip
} \fi
\newpage
\bigskip
\begin{abstract}
Marginal structural models (MSMs) estimate the causal effect of a time-varying treatment
in the presence of time-dependent confounding via weighted regression.
The standard approach of using inverse probability of treatment weighting (IPTW)
can lead to high-variance estimates due to extreme weights and be sensitive to model misspecification.
Various methods have been proposed to partially address this, including truncation and stabilized-IPTW to temper extreme weights and covariate balancing propensity score (CBPS) to address treatment model misspecification.
In this paper, we present Kernel Optimal Weighting (KOW), a convex-optimization-based approach that finds weights for fitting the MSM that optimally balance time-dependent confounders while simultaneously penalizing extreme weights, directly addressing the above limitations. 
We further extend KOW to control for informative censoring. We evaluate the performance of KOW in a simulation study, comparing it with IPTW, stabilized-IPTW, and CBPS. We demonstrate the use of KOW in studying the effect of treatment initiation on time-to-death among people living with human immunodeficiency virus and the effect of negative advertising on elections in the United States.
\end{abstract}

\noindent%
{\it Keywords:}  causal inference, optimization, covariate balance, time-dependent treatments, marginal structural models
\vfill

\newpage
\spacingset{1.45} % DON'T change the spacing!
\section{Introduction} \label{sec:introduction}
Marginal structural models (MSMs) offer a successful way to estimate the causal effect of a time-varying treatment on an outcome of interest from longitudinal data in observational studies \citep{robins2000marginal,robins2000marginalb}. For example, they have been used to estimate the optimal timing of HIV treatment initiation \citep{hiv2011initiate}, to evaluate the effect of hormone therapy on cardiovascular outcomes \citep{hernan2008observational}, and to evaluate the impact of negative advertising on election outcomes \citep{blackwell2013framework}.
%For example, the \cite{hiv2011initiate} used an MSM to estimate the optimal timing of HIV treatment initiation, \cite{hernan2008observational} used an MSM to evaluate the effect of hormone therapy on cardiovascular outcomes, and \cite{blackwell2013framework} used it to evaluate the impact of negative advertising on election outcomes.  
The increasing popularity of MSMs among applied researchers derives from their ability to control for time-dependent confounders, which are confounders that are affected by previous treatments and affect future ones. In particular, as shown by \cite{robins2000marginalb} and \cite{blackwell2013framework}, standard methods, such as regression or matching, fail to control for time-dependent confounding, introducing post-treatment bias. In contrast, MSMs consistently estimate the causal effect of a time-varying treatment via inverse probability of treatment weighting (IPTW), which controls for time-dependent confounding by weighting each subject under study by the inverse of their probability of being treated given covariates, {\textit{i.e.}, the propensity score \citep{rosenbaum1983central},} mimicking a sequential randomized trial. In other words, IPTW creates a hypothetical pseudo-population where time-dependent confounders are balanced over time.  

\textcolor{black}{Despite their wide range of applications, the usage of these methods in observational studies may be jeopardized by their considerable dependence on}
%Despite their wide use and theoretical appeal, these methods rely considerably on 
 positivity. This assumption requires that, at each time period, the probability of being assigned to the treatment, conditional on the history of treatment and confounders, is not 0 or 1 \citep{robins2000marginal}. \textcolor{black}{Even if positivity holds theoretically, when propensities are close to 0 or 1, it can be \textit{practically} violated}.
%it can Practical} violations of positivity or even just small nonzero probabilities (small relative to sample size), 
Practical positivity violations lead to extreme and unstable weights, which in turn yield very low precision and misleading inferences \citep{kang2007demystifying,robins1995analysis,scharfstein1999adjusting}. In addition, MSMs using IPTW are highly sensitive to \textit{misspecification} of the treatment assignment model, which can 
% \citep{mortimer2005application,kang2007demystifying,cole2008constructing,howe2011limitation}. In particular, , while 
% misspecifications of the treatment assignment model 
lead to biased estimates 
\citep{kang2007demystifying,lefebvre2008impact,cole2008constructing}.

Various statistical methods have been proposed in an attempt to overcome these challenges.  To deal with extreme weights, several authors \citep{cole2008constructing,xiao2013comparison} have suggested  truncation, whereby outlying weights are replaced with less extreme ones.  \cite{santacatterina2019optimal} proposed to use shrinkage instead of truncation as a more direct way to control the bias-variance trade-off.
\cite{robins2000marginalb} recommended the use of stabilized-IPTW (sIPTW) where inverse probability weights are normalized by the marginal probability of treatment. To control for misspecification of the treatment assignment model, \cite{imai2015robust} proposed to use the covariate balance propensity score (CBPS), which instead of plugging in a logistic regression estimate of propensity into IPTW finds the logistic model that balances covariates via the generalized method of moments. The method tries to balance the first moment of each covariate even if a logistic model is misspecified \citep{imai2014covariate}. 

%Although these methods either make the treatment model more robust to misspecification or trade accuracy for precision by truncating, normalizing, or shrinking plug-in weights, none of them simultaneously balance covariates while controlling for precision.
% , nor do they directly address estimation error. 

%weights are computed as the ratio between the probability of being treated and the propensity scores instead of directly inverting the propensity scores as done with IPTW, thus making the weights less variable.

\textcolor{black}{In this paper, we present and apply Kernel Optimal Weighting (KOW), which provides weights for fitting an MSM that optimally balance time-dependent confounders while controlling for precision. Specifically, 
%the proposed method directly minimizes the error due to time-dependent confounding while penalizing extreme weights. A
by solving a quadratic optimization problem over weights, the proposed method directly minimizes \textit{imbalance}, defined as the sum of discrepancies between the weighted observed data and the counterfactual of interest over all treatment regimes, while penalizing extreme weights.} 

%By decomposing this error as a functional on certain conditional expectation functions, 
%By solving a quadratic optimization problem over weights, KOW minimizes the operator norm of this imbalance with respect to a reproducing kernel Hilbert spaces (RKHS) while penalizing extreme weights. 
This extends the kernel optimal matching method of \citet{kallus2016generalized} and \citet{kallus2018more} to the longitudinal setting and to dealing with time-dependent confounders, where, similarly to regression and matching, it cannot be applied without introducing post-treatment bias.

%The RKHS can also be learned from the data itself.
%derived from a decomposition of the weighted average outcome among units under a specific treatment regime

%based on a new decomposition of the weighted estimator expressed as the sum of asolute empirical discrepancies between the weighted observed data and the counterfactuals of interest,

%the g-computation formula \citep{robins1986new}
% \textcolor{red}
{The proposed method has several attractive characteristics. First, by optimally balancing time-dependent confounders while penalizing extreme weights, it leads to better accuracy, precision, and total error. %First, it minimizes the error due to time-dependent confounding by optimally balancing covariates and simultaneously controls for precision by penalizing the weights, which leads to better accuracy, precision, and total error. 
In particular, in the simulation study presented in Section \ref{simu}, we show that the mean squared error (MSE) of the estimated effect of a time-varying treatment obtained by using KOW is lower than that obtained by using IPTW, sIPTW, and CBPS in all considered simulated scenarios.  Second, differently from \cite{imai2015robust}, where the number of covariate balancing conditions grows exponentially in the number of time periods, KOW only needs to minimize a number of {discrepancies} that grows linearly in the number of time periods. 
This feature leads to a lower computational time of KOW compared with CBPS when the total number of time periods increases, as shown in our simulation study in Section \ref{simu_comp_time} and in our study on the effect of negative advertising on election outcomes in Section \ref{caseblack}. Third, by optimally balancing covariates, KOW mitigates the effects of possible misspecification of the treatment model. In Section \ref{simu}, we show that KOW is more robust to model misspecification compared with the other methods. Fourth, KOW can balance non-additive covariate relationships by using kernels, which generalize the structure of conditional expectation functions, and does not restrict weights to follow a fixed logistic (or other parametric) form. 
In Section \ref{simu}, we show how KOW compares favorably with the aforementioned methods in all nonlinear scenarios, and in Section \ref{caseblack} we use KOW to balance non-additive covariate relationships estimating the effect of negative advertising on election outcomes.
 % and in Section \ref{caseblack} we use KOW to estimate the effect of negative advertising on vote shares. 
Fifth, KOW can be easily generalized to other settings, such as informative censoring. We do just that in Section \ref{censor} and, in Section \ref{casehiv}, we use this extension to study the effect of human immunodeficiency virus (HIV) treatment on time to death among people living with HIV. Finally, KOW can be solved by using off-the-shelf solvers for quadratic optimization. }

%Finally, to our knowledge, KOW is the first attempt to provide balancing weights that simultaneously balance covariates and control for precision when interested in the estimation of time-varying treatment effects affected by time-dependent confounding.
 
In the next section, we briefly introduce the literature of MSMs (Section \ref{sec:rew_msm}). In Section \ref{kow} we develop and define KOW. We then discuss some practical guidelines on the use of KOW (Section \ref{guidelines}). In Section \ref{simu} we report the results of a simulation study aimed at comparing KOW with IPTW, sIPTW, and CBPS. In Section \ref{censor}, we extend KOW to control for informative censoring. We then present two empirical applications of KOW in medicine and political science (Section \ref{empirics}). We offer some concluding remarks in Section \ref{conclusions}.

\section{Marginal structural models for longitudinal data} \label{sec:rew_msm}

%We consider the setting of longitudinal observational studies. 
In this section, we briefly review MSMs \citep{robins2000marginal,robins2000marginalb}. Suppose we have a simple random sample with replacement of size $n$ from a population. 
% For each unit $i=1, \ldots, n$, we take measurements at each time period $t=1, \ldots, T$. Specifically, f
For each unit $i=1, \ldots, n$ and time period $t=1, \ldots, T$, we denote the binary time-varying treatment variable by $A_{it}$, with $A_{it}=0$ meaning not being treated at time $t$ and $A_{it}=1$ being treated at time $t$, and time-dependent confounders $X_{it}$. We denote by $\overline{A}_{it}=\lbrace A_{i1}, \ldots, A_{it} \rbrace$ the treatment history up to time $t$ and by $\overline{X}_{it}=\lbrace X_{i1}, \ldots, X_{it} \rbrace$ the history of confounders up to time $t$. $X_{i1}$ represents the time-invariant confounders, \textit{i.e.}, confounders that do not depend on past treatments. 
% \textcolor{red}
{We denote by $\overline{a}_t$ and $ \overline{x}_t$ possible realizations of the treatment history $\overline{A}_{it}$ and the confounder history $\overline{X}_{it}$, respectively.
We use $\mathbbm{1}[\cdot]$ to denote the indicator so that $\mathbbm1\left[\overline A_{it}=\overline a_t\right]$ is the variable that is 1 if $\overline A_{it}=\overline a_t$ and 0 otherwise.
To streamline notation, we will refer to $\overline{A}_{iT}$ as $\overline{A}_{i}$, $\overline{a}_{T}$ as $\overline{a}$, $\overline{X}_{iT}$ as $\overline{X}_{i}$, and to $\overline{x}_{T}$ as $\overline{x}$.
} 
% \todo{ok?} 
For each unit $i=1, \ldots, n$, we denote by $Y_i$ the outcome variable observed at the end of the study. Using the potential outcome framework \citep{imbens2015causal}, we denote by $Y_i(\overline{a})$ the potential outcome we would see if we were to apply the treatment regime $\overline{a}\in\mathcal A$ to the $i^\text{th}$ unit,
where $\mathcal A=\{0,1\}^T$ is the space of treatment regimes. 
Throughout, we drop the subscripts $i$ on these variables to refer to a generic unit.

We impose the assumptions of consistency, non-interference, positivity and sequential ignorability \citep{imbens2015causal,hernan2010causal}. Consistency and non-interference \citep[also known as SUTVA;][]{rubin1980randomization} can be encapsulated in that the potential outcomes are well-defined and the observed outcome corresponds to the potential outcome of the treatment regime applied to that unit, \textit{i.e.,} 
% $Y_i=Y_(\overline{A}_{i})$.
$Y=Y(\overline{A})$.
% This encodes that the treatment assigned to one unit does not affect the outcome of another unit and that there is only a single version of each treatment regime. These two assumptions are usually referred to as Stable Unit Treatment Value Assumption (SUTVA) \citep{rubin1980randomization}. 
As previously introduced, positivity states that, for each time $t=1, \ldots, T$, the probability of being treated at time $t$ conditioned on the treatment history up to time $t-1$ and the confounder history up to time $t$, is not 0 or 1, \textit{i.e.},
\begin{equation}
\label{positivity}
	0 < \mathbb P(A_{t} = 1 \mid \overline{A}_{t-1}, \overline{X}_{t} ) < 1 \quad \forall \ t \in \lbrace 1, \ldots, T \rbrace,
\end{equation}
% 
% \textcolor{black}{where we dropped the subscript $i$ to denote a generic unit from the population. }
Sequential ignorability states that the potential outcome $Y(\overline{a})$ is independent of treatment assignment at time $t$, given the treatment history up to time $t-1$ and the confounder history up to time $t$.  Formally, sequential ignorability is defined as
\begin{equation}
\label{igno}
	Y(\overline{a}) \independent A_{t} \mid \overline{A}_{t-1}, \overline{X}_{t} \quad \forall \ t \in \lbrace 1, \ldots, T \rbrace.
\end{equation}

 An MSM is a model for the marginal causal effect of a time-varying treatment regime on the mean of $Y$, that is,
\begin{equation}
\label{MSM}
	\mathbbm{E} \left[   Y(\overline{a})  \right] = g(\overline{a},\bm{\beta}),
\end{equation}
where $g(\overline{a},\bm{\beta})$ is some known function class parametrized by $\bm\beta$. For example, a commonly used MSM is based on additive effects with a common coefficient: $g(\overline{a},\bm{\beta})=\beta_1 + \beta_2 \sum_{t=1}^Ta_{t}$, where the parameter $\beta_2$ is the causal parameter of interest. 
% \textcolor{red}
{Usually, $\bm\beta$ is computed by a weighted regression of the outcome on the treatment regime alone using weighted least squares (WLS), \textit{i.e.,} $\min_{\bm\beta}\sum_{i=1}^nW_i(Y_i-
g(\overline A_{i},\bm\beta)
% \beta_1-\beta_2\sum_{t=1}^TA_{it}
)^2$, and Wald confidence intervals are constructed using robust (sandwich) standard errors \citep{freedman2006so,robins2000marginal,hernan2001marginal}.
In order to consistently estimate $\bm\beta$, the weights $W_{1:n}=(W_1, \dots, W_n)$, must account for the non-randomness of the treatment assignment mechanism, \textit{i.e.,} 
the confounding.
\citet{robins2000marginal} showed that the set of inverse probability weights and stabilized inverse probability weights achieve this objective. These weights are defined as follows,}
\begin{equation}
\label{sipweights}
	\begin{aligned}
			 W_i^{\text{IPTW}}=w(\overline{A}_{i}, \overline{X}_{i}),\quad w(\overline{a}, \overline{x}) 	&= \prod_{t=1}^T \frac{h_t(\overline a_t)}{\mathbb P(A_{t}=a_t \mid \overline{A}_{t-1}=\overline a_{t-1}, \overline{X}_{t}=\overline x_t )},
	\end{aligned}
\end{equation}

%\todo{$h$ removed i}
%\begin{equation}
%\label{sipweights}
%	\begin{aligned}
%			 W_i= \prod_{t=1}^T \frac{h(\overline a_t)}{\mathbb %P(A_{i,t}=a_t \mid \overline{A}_{i,t-1}=\overline a_{t-1}, %\overline{X}_{i,t}=\overline x_t )},
%	\end{aligned}
%\end{equation}

% 
% 
\noindent
where $h_t(\overline{a}_{t})$ is a known function of treatment history. The set of inverse probability weights is obtained by setting $h_t(\overline{a}_{t})=1$, while the set of stabilized inverse probability weights is obtained by setting $h_t(\overline{a}_{t})=\mathbb P(A_{t} = a_{t}  \mid \overline{A}_{t-1} = \overline{a}_{t-1})$. 
To estimate weights of the form of eq.~\eqref{sipweights},
one first estimates the conditional probability models using
either parametric methods such as logistic regression or other machine learning methods \citep{karim2017application,gruber2015ensemble,karim2017estimating} and then these estimates are plugged in directly into eq.~\eqref{sipweights} to derive weights, which are then plugged into the WLS. Stabilized weights seek to attenuate the variability of inverse probability weights by normalizing them by the marginal probability of treatment. Since the additional factor is a function of treatment regime alone, it does not affect the consistency of the WLS if the MSM is well specified. Both sets of weights, however, rely on plugging in an estimate of a probability into the denominator, meaning that when the true probability is even modestly close to 0, any small error in estimating it can translate to very large errors in estimating the weights and to estimated weights that are extremely variable. 
% \textcolor{red}
{Furthermore, both sets of weights rely on the correct specification of the conditional probability models used to estimate the weights in eq.~\eqref{sipweights}.
% , which is hard to achieve in real-world settings.
}

To overcome this issue, 
% \textcolor{red}
{\citet{imai2015robust} proposed to estimate weights of the form of eq.~\eqref{sipweights} that improve balance of confounders by generalizing the covariate balancing propensity score (CBPS) methodology.}  Instead of plugging in probability estimates based on logistic regression, CBPS uses the generalized method of moments to find the logistic regression model that if plugged in would lead to weights, $W_i^{\text{CBPS}}$,  that approximately solve a subset of the moment conditions that the true inverse probability weights, eq.~\eqref{sipweights}, satisfy. 

% \textcolor{red}
{
Differently than IPTW, sIPTW and CBPS, in the next Section, we characterize imbalance as the discrepancies in observed average outcomes due to confounding, consider their worst case values, and use quadratic optimization to obtain weights that directly optimize the balance of time-invariant and time-dependent confounders over all possible weights while controlling precision.}

\section{{Kernel Optimal Weighting}} 
\label{kow} 

In this Section we present a convex-optimization-based approach that obtains weights that minimize the imbalance due to time-dependent confounding (\textit{i.e.,} maximize balance thereof) while controlling precision. Toward that end, in Section \ref{imbalance}, we provide a definition of imbalance. Specifically, we define imbalance as the sum of discrepancies between the weighted \textit{observed} data and the \textit{unobserved} counterfactual of interest over all treatment regimes. Since this imbalance depends on unknown functions, in Section \ref{nswci} we consider the worst case imbalance, which guards against all possible realizations of the unknown functions. We also show that the worst case imbalance has the attractive characteristic that the number of discrepancies considered grows \textit{linearly} in the number of time periods and not \textit{exponentially} like the number of treatment regimes.  We finally show how to minimize this quantity while controlling precision using kernels, reproducing kernel Hilbert space (RKHS) and off-the-shelf solvers for quadratic optimization  (Sections \ref{min}-\ref{kernelqp}). 

%based on the difference between the weighted \textit{observed} data and the \textit{unobserved} counterfactuals of interest,

% , \textit{i.e.}, $\mathcal A=\{0,1\}^2$ where $\mathcal A$ is the space of treatment regimes, and continue by discussing the general case of multiple time periods. % We define imbalance as the sum of of absolute discrepancies over all treatment regimes. We then propose a way to minimize imbalance while controlling for precision by using kernels and quadratic programming.

\subsection{Defining imbalance}
\label{imbalance}
% Given some sample weights, we wish to quantify the discrepancy between 
% the weighted observed data and the counterfactuals of interest.
% For the sake of clarity, our presentation will focus on population averages in decomposing this discrepancy. 
% \todo{ok?}

% Toward that end, 
Consider any population weights $W=w(\overline{A}, \overline{X})$, where $w(\cdot)$ is a function that depends on the treatment and confounder histories. %Suppose we are interested in the contrast $\mathbbm{E} [   Y(\overline a)  ]-\mathbbm{E} [   Y(\overline a')  ]$, where $\overline a$ and $\overline a'$ are two different treatment regimes. 
%Consider the weighted average outcome among the $\overline a$-treated units, $\mathbbm{E} [ W \mathbbm{1}[\overline A=\overline a] Y ]$ and the average potential outcome of $\overline a$, $\mathbbm{E} [   Y(\overline a)  ]$.  
% The following theorem 
In this Section, we will
show that, under 
consistency and assumptions \eqref{positivity}--\eqref{igno}, we can decompose the difference between the weighted average outcome among the $\overline a$-treated units, $\mathbbm{E} [ W \mathbbm{1}[\overline A=\overline a] Y ]$, and the average potential outcome of $\overline a$, $\mathbbm{E} [   Y(\overline a)  ]$, into a sum over time points $t$ of discrepancies 
involving the values of treatment and confounder histories up to time $t$.

To build intuition we start by explaining this decomposition in the case of two time periods $T=2$.
% To fix ideas, we describe the intuition behind the main result of Theorem \ref{thm1} and that of discrepancy in the case of two time periods. 
Assuming consistency and assumptions \eqref{positivity}--\eqref{igno}, for each $\overline a=(a_1,a_2)\in\mathcal A$, we can decompose the weighted average outcome among the $\overline a$-treated units as follows:
% that are deterministic functions of $\overline A_T,\overline X_T$.
% \todo{Should we add pdf wrt?}
\begin{align}
\label{deco}
\E[W\I[\overline A=\overline a]Y]&=
\E[W\I[A_1=a_1]\I[A_2=a_2]\E[Y(\overline a)\mid A_1,A_2,X_1,X_2]]
\\\notag&=
\E[W\I[A_1=a_1]\I[A_2=a_2]\E[Y(\overline a)\mid A_1,X_1,X_2]]
\\\notag&=
\E[W\I[A_1=a_1]\E[Y(\overline a)\mid A_1,X_1,X_2]]
+\delta^{(2)}_{a_2}(W,g_{\overline a}^{(2)})
\\\notag&=
\E[W\I[A_1=a_1]\E[Y(\overline a)\mid X_1]]
+\delta^{(2)}_{a_2}(W,g_{\overline a}^{(2)})
\\\notag&=
\E[Y(\overline a)]
+\delta^{(1)}_{a_1}(W,g_{\overline a}^{(1)})+\delta^{(2)}_{a_2}(W,g_{\overline a}^{(2)})
\\\notag&= \E[Y(\overline a)] + \sum_{t=1}^{2}\delta^{(t)}_{a_t}(W,g_{\overline a}^{(t)}),
\end{align}	
where the first equality follows from iterated expectation, the second from sequential ignorability, the fourth from iterated expectation and sequential ignorability and the third and fifth from the following definitions, which exactly capture the difference between the two sides of the third and fifth equalities,
\begin{align}	
 \delta^{(2)}_{a_2}(W,h^{(2)}) &= \mathbbm{E} \left[ W \mathbbm{1}[A_2=a_2] h^{(2)}(A_1,X_1,X_2) \right] - \mathbbm{E} \left[W h^{(2)}(A_1,X_1,X_2) \right] \\\notag
    g_{\overline a}^{(2)}(A_1,X_1,X_2) &=  \mathbbm{1}[A_1=a_1]\mathbbm{E} \left[ Y(\overline a) \mid A_1,X_1,X_2 \right]\\\notag
		\delta^{(1)}_{a_1}(W,h^{(1)}) &= \mathbbm{E} \left[ W  \mathbbm{1}[A_1=a_1] h^{(1)}(X_1) \right] - \mathbbm{E} \left[   h^{(1)}(X_1) \right] \\\notag
    g_{\overline a}^{(1)}(X_1) &=  \mathbbm{E} \left[ Y(\overline a) \mid X_1 \right].
\end{align}
Note our use of $h^{(t)}$ as a 
generic dummy function and 
$g_{\overline a}^{(t)}$ as a \textit{specific} function that depends on the particular (unknown) distribution of $\overline X_t,\overline A_{t-1},Y(\overline a)$.

% \noindent
%A more detailed derivation of eq.~\eqref{deco} is provided in the appendix.\todo{add commnent on notation celia}
This gives a definition of discrepancy, $\delta^{(t)}_{a_t}(W,h^{(t)})$, where the subscript $a_t\in\{0,1\}$ refers to the treatment assigned at time $t$,  $W=w(\overline{A}, \overline{X})$ is a population weight, and $h^{(t)}$ is a given function of interest of the treatment and confounder history up to $t$, $\overline A_{t-1},\overline X_t$. The function $g_{\overline a}^{(t)}$ is one such function.
In particular, for every $a_1\in\{0,1\}$,
the quantity $\delta^{(1)}_{a_1}(W,h^{(1)})$ is the discrepancy between the $h^{(1)}$-moments of the baseline confounder distribution in the weighted $a_1$-treated population and of the distribution in the whole population. Similarly, for every $a_2\in\{0,1\}$, $\delta^{(2)}_{a_2}(W,h^{(2)})$ is a discrepancy in the $h^{(2)}$-moment of treatment and confounder histories at the start of time step $2$.
% of the distribution in the weighted $a_2$-treated population and the whole population.
% This provides a formal definition of the discrepancy at time $t$ as the moment differences in the distributions of confounder and treatment histories between the whole population and the subpopulations treated with
% $0$ at time $t$ and with $1$ at time $t$. 
What we have shown above is how these discrepancies directly relate to the difference between weighted averages of observed outcomes and true averages of unknown counterfactuals of interest.
Specifically, we have shown that when we measure these discrepancies with respect to the specific function $g_{\overline a}^{(t)}$, then their sum gives that difference.

% treated with a particular treatment at  and the whole population.
% In particular, differently from CBPS in which only first moment conditions are used \citep{imai2015robust}, \textit{i.e., $g_{\overline a}^{(1)}(X_1)=X_1$},  here, we take a more general approach and consider moment conditions that depend on some \textit{unknown} functions $g_{\overline a}^{(1)}$ and $g_{\overline a}^{(2)}$, which magnitudes, since unknown, need to be limited (we do that in the next Section). 

%Next, we generalize this idea to more than two time periods, letting $T$ be arbitrary so the space of treatment regimes is $\mathcal A=\{0,1\}^T$.
% We then consider the space of treatment regimes to be the Cartesian product $\mathcal A=\mathcal A_1\times\cdots\times A_T$ of the possible treatments $\mathcal A_t$ in each period (usually we have $\mathcal A_t=\{0,1\}$). 
%In addition to the definitions in eqs.~\eqref{deltas}--\eqref{deltase},

We can extend this decomposition to general horizons $T\geq1$.
Let us define the same discrepancies for any time $t\geq3$ as
\begin{equation*}
	\begin{aligned}
	\delta^{(t)}_{a_t}(W,h^{(t)}) &= \mathbbm{E} \left[ W \mathbbm{1}[A_t=a_t] h^{(t)}(\overline A_{t-1},\overline X_t) \right] - \mathbbm{E} \left[ 
    % (W\I[t\geq2]+\I[t=1]) 
    W
    h^{(t)}(\overline A_{t-1},\overline X_t) \right],\\
    g_{\overline a}^{(t)}(\overline A_{t-1},\overline X_t) &=  \mathbbm{1}[\overline A_{t-1}=\overline a_{t-1}]\mathbbm{E} \left[ Y(\overline a) \mid \overline A_{t-1},\overline X_t \right].
    \end{aligned}
\end{equation*}
The following result gives the general decomposition of the difference between weighted average of observed outcomes and true average of counterfactuals as the sum of $T$ discrepancies, one for every time step:
\begin{theorem}
\label{thm1}
Under assumptions \eqref{positivity}--\eqref{igno}, for each $\overline a\in\mathcal A=\{0,1\}^T$, 
\begin{equation*}
 \mathbbm{E} \left[ W \mathbbm{1}[\overline A=\overline a] Y \right]
- \mathbbm{E} \left[ Y(\overline a) \right]
=
\sum_{t=1}^T\delta^{(t)}_{a_t}(W,g_{\overline a}^{(t)}). 
\end{equation*}
% where for $t=1$
% \begin{equation*}
% 	\begin{aligned}
% 	\delta^{(t)}_{a_t}(W,g_{\overline a}^{(t)}) &= \mathbbm{E} \left[ W  \mathbbm{1}[A_t=a_t] g_{\overline a}^{(t)}(X_t) \right] - \mathbbm{E} \left[   g_{\overline a}^{(t)}(X_t) \right], \\
%     g_{\overline a}^{(t)}(X_t) &=  \mathbbm{E} \left[ Y(\overline a) \mid X_t \right],
%     \end{aligned}
% \end{equation*}
% \end{equation*}
\end{theorem}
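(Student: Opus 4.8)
The plan is to prove the identity by peeling off one time step at a time, turning the $T=2$ calculation in eq.~\eqref{deco} into the template for a telescoping argument (equivalently, an induction on the horizon $T$). I would fix $\overline a$ throughout and abbreviate the ``clean'' conditional means $m_t := \E[Y(\overline a)\mid\overline A_{t-1},\overline X_t]$, which condition on everything observed strictly before the time-$t$ treatment, so that $g^{(t)}_{\overline a}=\I[\overline A_{t-1}=\overline a_{t-1}]\,m_t$ and, by sequential ignorability at $t$, $m_t=\E[Y(\overline a)\mid\overline A_t,\overline X_t]$ as well.

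First I would pin down the top of the telescope. Consistency gives $\I[\overline A=\overline a]Y=\I[\overline A=\overline a]Y(\overline a)$, and since $W\I[\overline A=\overline a]$ is measurable with respect to the full history, iterated expectation yields $\E[W\I[\overline A=\overline a]Y]=\E[W\I[\overline A=\overline a]\,\E[Y(\overline a)\mid\overline A,\overline X]]$; sequential ignorability at time $T$ then collapses the inner expectation to $m_T$, so the left-hand side becomes $\E[W\I[A_T=a_T]\,g^{(T)}_{\overline a}]$, exactly the leading term of $\delta^{(T)}_{a_T}$. Subtracting its definition leaves the baseline term $\E[W\,g^{(T)}_{\overline a}]=\E[W\I[\overline A_{T-1}=\overline a_{T-1}]\,m_T]$.

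The crux is the recursive link, which I expect to be the main obstacle. After peeling the indicator at time $t$ I am left with $\E[W\I[\overline A_{t-1}=\overline a_{t-1}]\,m_t]$ and I must show it equals the leading term of $\delta^{(t-1)}_{a_{t-1}}$, i.e.\ $\E[W\I[\overline A_{t-1}=\overline a_{t-1}]\,m_{t-1}]$; this is the fourth equality of eq.~\eqref{deco}, attributed there to ``iterated expectation and sequential ignorability''. Concretely I must integrate out the freshly revealed confounder $X_t$, replacing $m_t=\E[Y(\overline a)\mid\overline A_{t-1},\overline X_t]$ by $\E[Y(\overline a)\mid\overline A_{t-1},\overline X_{t-1}]$ inside the $W$-weighted average over $\{\overline A_{t-1}=\overline a_{t-1}\}$, and then discarding the redundant $A_{t-1}$ in the conditioning via sequential ignorability at $t-1$. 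Because $W$ is a function of the \emph{entire} history, justifying this collapse is exactly where the sequential structure of the assumptions must be deployed with care, so I would isolate it as a standalone lemma and treat it as the heart of the proof.

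Granting the lemma, the remainder is pure telescoping: iterating the reduction for $t=T,T-1,\dots,2$ accumulates $\sum_{t=2}^T\delta^{(t)}_{a_t}(W,g^{(t)}_{\overline a})$ and leaves $\E[W\I[A_1=a_1]\,m_1]$ with $m_1=\E[Y(\overline a)\mid X_1]$. The final step uses the $t=1$ definition, whose baseline is the \emph{unweighted} $\E[m_1]$; because $\E[m_1]=\E[\E[Y(\overline a)\mid X_1]]=\E[Y(\overline a)]$ by the tower property, this closes the telescope and gives $\E[W\I[\overline A=\overline a]Y]-\E[Y(\overline a)]=\sum_{t=1}^T\delta^{(t)}_{a_t}(W,g^{(t)}_{\overline a})$, as claimed. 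For a fully formal induction I would instead phrase the inductive hypothesis as the horizon-$(T-1)$ identity applied conditionally on the baseline, but the telescoping bookkeeping above is equivalent and more transparent.
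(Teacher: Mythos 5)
Your skeleton coincides with the paper's own proof: the appendix establishes exactly your telescope for $T=2$ (consistency, iterated expectations and sequential ignorability at the last period, definitional peeling of $\delta^{(2)}$, collapse of the conditioning set, definitional peeling of $\delta^{(1)}$, tower property at baseline) and extends to general $T$ by induction, just as you propose; your endpoints and the tautological peeling steps (valid because $g^{(t)}_{\overline a}$ carries the factor $\I[\overline A_{t-1}=\overline a_{t-1}]$) all match. The problem is that the one step carrying the mathematical content --- your ``recursive link'' $\E\bigl[W\,\I[\overline A_{t-1}=\overline a_{t-1}]\,m_t\bigr]=\E\bigl[W\,\I[\overline A_{t-1}=\overline a_{t-1}]\,m_{t-1}\bigr]$ --- is precisely the step you defer to a ``standalone lemma'' and never prove. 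A proposal that grants itself the heart of the argument has not proved the theorem. The paper, by contrast, discharges this step in two labeled moves (the fourth equality of eq.~\eqref{deco}, split in the appendix): first replace $\E[Y(\overline a)\mid A_1,X_1,X_2]$ by $\E[Y(\overline a)\mid A_1,X_1]$ inside the weighted average (``iterated expectations''), then delete $A_1$ from the conditioning using \eqref{igno}.

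Moreover, your instinct that this collapse is delicate because $W=w(\overline A,\overline X)$ sees the whole history is correct, which makes the omission substantive rather than cosmetic: with no hypothesis on $W$ beyond nonnegativity --- which is all the theorem's statement gives you --- the lemma cannot be proved, because it is false. Writing $m_2=\E[Y(\overline a)\mid A_1,X_1,X_2]$ and $m_1=\E[Y(\overline a)\mid A_1,X_1]$, the needed identity is $\E\bigl[W\,\I[A_1=a_1](m_2-m_1)\bigr]=0$; since $\E[m_2-m_1\mid A_1,X_1]=0$, this holds whenever $\E[W\mid A_1,X_1,X_2]$ is a function of $(A_1,X_1)$ alone, which is true for the product-form weights of eq.~\eqref{sipweights} (for binary IPTW, $\E[W\mid A_1,X_1,X_2]=2/\mathbb{P}(A_1\mid X_1)$, since the time-$2$ factor integrates out), but not for an arbitrary $w$: take $X_1$ degenerate, $A_1,A_2$ fair coins, $X_2\sim N(0,1)$ independent of $A_1$, $Y(\overline a)=X_2$ for every $\overline a$, and $W=e^{X_2}$; then \eqref{positivity}--\eqref{igno} and consistency all hold, yet the two sides of your lemma differ by $\mathbb{P}(A_1=a_1)\,\E[X_2e^{X_2}]=e^{1/2}/2\neq0$. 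So to complete the proof you must either reproduce the paper's step as stated, accepting the iterated-expectations interchange exactly as the appendix does, or make explicit the conditional mean-independence property of $W$ that licenses it; identifying the difficulty and stopping there is the gap.
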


% Theorem \ref{thm1} provides a formal definition of discrepancy, $\delta^{(t)}_{a_t}(W,g_{\overline a}^{(t)})$, where the subscript $a_t\in\{0,1\}$ refers to the treatment assigned at time $t$,  $W=w(\overline{A}, \overline{X})$ is a population weight, and $g_{\overline a}^{(t)}$ is a function that depends on the treatment history up to $t-1$, $\overline A_{t-1}$ and confounder history up to $t$, $\overline X_t$. The subscript $\overline a$ of $g_{\overline a}^{(t)}$ refers to the fact that we are interested in the average potential outcome of treatment regime $\overline a$, $\mathbbm{E} [   Y(\overline a)  ]$.
%In other words, Theorem \ref{thm1} shows that the mean $\mathbbm{E} [   Y(\overline a)  ]$  equals to a weighted average outcome among the $\overline a$-treated units,  $\mathbbm{E} [ W \mathbbm{1}[\overline A=\overline a] Y ]$ minus a sum of discrepancies which depends on the weights $W$ and some unknown functions of the of treatment and confounder histories. 
%Theorem \ref{thm1}, tells us that the difference between the weighted average outcome among the $\overline a$-treated units and the counterfactual of interest is equal to the sum over time of discrepancies $\delta^{(t)}_{a_t}(W,g_{\overline a}^{(t)})$, which depend on the weights $W$ and some unknown functions $g_{\overline a}^{(t)}$. 

Based on the results of Theorem \ref{thm1}, it is clear that if we want the difference between average counterfactual outcomes and average weighted factual outcomes to be small for all treatment regimes $\overline a$ then we should seek weights $W$ that make 
$$\overline\delta_{\overline a}(W,\overline g_{\overline a})=\sum_{t=1}^T\delta^{(t)}_{a_t}(W,g_{\overline a}^{(t)})
% +\delta^{(2)}_{a_2}(W,g_{\overline a}^{(2)}),
$$
small for all $\overline a$,
where we write $\overline h=(h^{(1)},\dots,h^{(T)})$ for any set of $T$ functions. 
% To the extent that we are interested in all such differences being small over all treatment regimes $\overline a$, we can think of the imabalance induced by the weights $W$ as the sum of these
% eq. \eqref{deco} and the aforementioned interpretation of the discrepancies, it becomes clear that, our goal is therefore to find a set of weights $W$ for which these discrepancies are minimized, or, in other words,  makes the $g$-moment condition in  the $a_t$-treated sample look the same as that in the whole sample, for all $t$, thus minimizing imbalance.

%\todo{ok?}\textcolor{black}{Again, $\hat\delta^{(1)}_{a_1}(W,g_{\overline a}^{(1)})$ can be seen as the empirical discrepancy between $g_{\overline a}^{(1)}(X_1)$, a function of $X_1$, within units with $A_1=a_1$ and that of all units. The discrepancy can be minimized by finding a set of weights $W$ for which $g_{\overline a}^{(1)}(X_1)$  within units with $A_1=a_1$
%is the same as that of the whole sample. Similar reasoning can be apply for $\hat\delta^{(2)}_{a_2}(W,g_{\overline a}^{(2)})$. %In other words,  we want to find a set of weights that creates a pseudo-population in which 

% \begin{equation}
% \label{discrepancy}
% 	\begin{aligned}
% 		\mathbbm{E} \left[ W \mathbbm{1}[A_2=a_2] \mathbbm{1}[A_1=a_1] Y \right]  
% 		- \mathbbm{E} \left[   Y(\overline a)  \right] = \delta^{(2)}_{a_2}(W,g_{\overline a}^{(2)})  + \delta^{(1)}_{a_1}(W,g_{\overline a}^{(1)}) .
% 		\end{aligned}
% \end{equation}
% 
% 
% \noindent

The empirical counterparts to
$\delta^{(t)}_{a_t}(W,h^{(t)})$
% ,\delta^{(2)}_{a_2}(W,g_{\overline a}^{(2)})$ 
are the sample moment discrepancies for a given set of sample weights $W_{1:n}$:
% -- are
\begin{equation}
\label{deltase}
	\begin{aligned}
		\hat\delta^{(t)}_{a_t}(W_{1:n},h^{(t)}) &= \frac1n\sum_{i=1}^n (W_i\mathbbm{1}[A_{it}=a_t]-
    % W_i\I[t\geq2]-\I[t=1]
    W_i
    ) h^{(t)}(\overline A_{i,t-1},\overline X_{it}),\quad \forall t\geq2, 
    \\
		\hat\delta^{(1)}_{a_1}(W_{1:n},h^{(1)}) &= \frac1n\sum_{i=1}^n W_i  \mathbbm{1}[A_{i1}=a_1] h^{(1)}(X_{i1}) - \frac1n\sum_{i=1}^n  h^{(1)}(X_{i1}),  \\
		\hat{\overline\delta}_{\overline a}(W_{1:n},\overline h)&=\sum_{t=1}^T\hat\delta^{(t)}_{a_t}(W_{1:n},h^{(t)}).
% 		+\hat\delta^{(2)}_{a_2}(W_{1:n},g_{\overline a}^{(2)}).
	\end{aligned}
\end{equation}
Thus, we will seek samples weights $W_{1:n}$ that make $\hat{\overline\delta}_{\overline a}(W_{1:n},\overline g_{\overline a})$ small for all treatment regimes $\overline a$.
Toward that end, 
for \textit{any} set of given functions 
$(\overline h_{\overline a})_{\overline a\in\mathcal A}$,
% 
% 
%Our aim is to find a set of weights $W$ that minimizes $\delta^{(1)}_{a_1}(W,g_{a_1,a_2}^{(1)})$ and $\delta^{(2)}_{a_2}(W,g_{a_1,a_2}^{(2)})$ across treated and control units.   In other words, we can think of e
% 
%Equations (\ref{deltas}) can be though as some measures of imbalance between treated and untreated units at time 1 and 2, respectively. Accordingly, b
% 
% \noindent
% To make sure that the difference between average counterfactual outcomes and average weighted factual outcomes is small for all treatment regimes, we 
% Based on this, 
we define \textit{imbalance} of a set of weights $W_{1:n}$ as the average squared discrepancy over treatment regimes:
\begin{equation}
\label{biasb2}
	\begin{aligned}
\text{IMB}(W_{1:n};(\overline h_{\overline a})_{\overline a\in \mathcal A}) &=  \frac1{\left|\mathcal A\right|}\sum_{\overline a\in\mathcal A}\hat{\overline\delta}^2_{\overline a}(W_{1:n},\overline h_{\overline a}).
	\end{aligned}
\end{equation}
The particular imbalance of interest is given when we consider $\overline h_{\overline a}=\overline g_{\overline a}$.
One way to control this imbalance, $\text{IMB}(W_{1:n};(\overline g_{\overline a})_{\overline a\in \mathcal A})$, and consequently control the empirical discrepancies of interest, $\hat{\overline\delta}_{\overline a}(W_{1:n},\overline g_{\overline a})$, is by using inverse probability weights. If known, these weights make this quantity a sample average of mean-zero variables and thus close to zero for large $n$. However, the difficulties are that (a) even mild practical violations of positivity can lead to large variance of each of these terms and (b) we need to correctly estimate the sequential propensities.

Differently, we will seek to find weights that directly \emph{minimize} imbalance. There are two main challenges in this task. The first challenge is that the imbalance of interest depends on some unknown functions $\overline g_{\overline a}$. The second is that the number of treatment regimes grows exponentially in the number of time periods. In the next Section we show how the proposed methodology overcomes these two challenges.  

\subsection{Worst case imbalance}
\label{nswci}

To overcome the fact that we do not actually know the functions $\overline g_{\overline a}$ on which imbalance $\text{IMB}(W_{1:n};(\overline g_{\overline a})_{\overline a\in \mathcal A})$  depends, we will guard against all possible realizations of the unknown functions. Specifically, since $\hat{\overline\delta}_{\overline a}(W_{1:n},\overline g_{\overline a})$ scales linearly with $\overline g_{\overline a}$, we will consider its magnitude relative to that of $\overline g_{\overline a}$. We therefore need to define a magnitude.
% limiting the magnitude of these functions is equivalent to minimizing the imbalance relative to magnitude \citep{kallus2016generalized}. Thus, we minimize the worst-case imbalance, normalized relative to the magnitude of these functions.
% Specifically, since unknown, we want to limit the size of 
% these functions by guarding against 
% the imbalance for
% any possible realization of these functions. 
In particular, let us define
$$
\|\overline h\|=
\sqrt{\|h^{(1)}\|^2_{(1)}+\cdots+\|h^{(T)}\|^2_{(T)}},
$$
where 
$\|\cdot\|^2_{(t)}$ are some given extended seminorms on functions from the space of time-dependent confounders and treatment histories up to time $t$ to the space of outcomes. Compared to a norm, an extended seminorm may also assign the values of $0$ and $\infty$ to nonzero elements but must still satisfy triangle inequality and absolute homogeneity. We will discuss specific choices of such seminorms $\|\cdot\|^2_{(t)}$ in Section \ref{kernelqp}.
% 
% an extended seminorm $\|\cdot\|_{(1)}$, \textit{i.e.} a norm on functions from the space of time-dependent confounders and treatment histories to the space of outcomes that can also assign the values 0 and $\infty$ to nonzero elements, on the space of functions $g_{\overline a}^{(1)}(X_1)$ and an extended seminorm $\|\cdot\|_{(2)}$ on the space of functions $g_{\overline a}^{(2)}(A_1,X_1,X_2)$,  (we discuss a specific choice of the norm in Section \ref{kernelqp}). 

% \todo{check}
Given these, we can define the \textit{worst case discrepancies}, $$\Delta^{(t)}_{a_t}(W_{1:n})=\sup_{h^{(t)}} \frac{\hat\delta^{(t)}_{a_t}(W_{1:n},h^{(t)})}{\|h^{(t)}\|_{(t)}} =
\sup_{\|h^{(t)}\|_{(t)}\leq1}\hat\delta^{(t)}_{a_1}(W_{1:n},h^{(t)}).$$
% $$\Delta^{(2)}_{a_2}(W_{1:n})=\sup_{g_{\overline a}^{(2)}} \frac{\hat\delta^{(2)}_{a_2}(W_{1:n},g_{\overline a}^{(2)})}{\|g_{\overline a}^{(2)}\|_{(2)}}=
% \sup_{\|g_{\overline a}^{(2)}\|_{(2)}\leq1}\hat\delta^{(2)}_{a_2}(W_{1:n},g_{\overline a}^{(2)}).$$ 
Note that $\Delta^{(t)}_{a_t}(W_{1:n})$ depends \textit{only} on the treatment at time $t$, $a_t$, and \textit{not} the whole treatment regime, $\overline a$.

Then the \textit{worst case imbalance} is given by
\begin{equation}
\label{nswcimb}
	\begin{aligned}
\mathcal{B}^2(W_{1:n})
&=
\sup_{
\|\overline h_{\overline a}\|\leq1\;\forall\overline a\in \mathcal A
}
\text{IMB}(W_{1:n};(\overline h_{\overline a}^{(t)})_{\overline a\in \mathcal A}) 
% =
% \sup_{
% \|\overline h_{\overline a}\|\leq1\;\forall\overline a\in \mathcal A
% }\frac1{\left|\mathcal A\right|}\sum_{\overline a\in\mathcal A}\hat{\overline\delta}^2_{\overline a}(W_{1:n},\overline h_{\overline a})
\\&=
\sup_{\overline h_{\overline a},\;\overline a\in \mathcal A}
\frac1{\left|\mathcal A\right|}\sum_{\overline a\in\mathcal A}\frac{\hat{\overline\delta}^2_{\overline a}(W_{1:n},\overline h_{\overline a})}{\|\overline h_{\overline a}\|^2} 
\\&
=\frac12\sum_{t=1}^T(\Delta^{(t)}_0(W_{1:n})^2+\Delta^{(t)}_1(W_{1:n})^2).
% =  \frac12\left(\Delta^{(1)}_{0}(W_{1:n})^2 + \Delta^{(1)}_{1}(W_{1:n})^2 + \Delta^{(2)}_{0}(W_{1:n})^2 + \Delta^{(2)}_{1}(W_{1:n})^2\right).
	\end{aligned}
\end{equation}
What is important to note is that this shows that the discrepancies of interest are essentially the same regardless of the particular treatment regime trajectory $\overline a$. 
That is, to control the discrepancies for \textit{all} trajectories $\overline a$ for \textit{all} possible realizations of $\overline g_{\overline a}$, at any time point $t$, we are only concerned with the discrepancies of histories $\overline A_{t-1},\overline X_t$ for those units treated at time $t$, $A_t=1$, and for those not, $A_t=0$.
So, while 
% there are combinatorially many 
the number of treatment regimes grows \textit{exponentially} in the number of periods, we need only to keep track of and minimize a number of discrepancies growing \textit{linearly} in the number of periods $T$. By eliminating each of these linearly-many imbalances, any time-dependent confounding would necessarily be removed, as shown by
Theorem~\ref{thm1}.
In Section \ref{simu_comp_time}, we show how this feature also translates to favorable computational time when dealing with many time periods.

\subsection{Minimizing imbalance while controlling precision}
\label{min}

We can obtain minimal imbalance by minimizing $\mathcal{B}^2(W)$.
% However, the resulting weights may potentially be highly variable, leading to extreme weights which in turn yield erratic inferences and low precision \citep{kang2007demystifying}. 
% In particular, recall that an MSM, in the form of $\mathbbm{E} \left[   Y_i(\overline{a})  \right]= S_i\beta $, where $ S_i\beta =(\beta_1+ \beta_2 \sum_{t=1}^T(a_{it}))$, is usually fit by a WLS estimator with $Var(\hat{\beta})=\sigma^2 (S^T W S)^{-1}$. It is therefore easy to see that large weights lower the precision of the resulting MSM estimates. 
However, to control for extreme weights we propose to regularize the weight variables $W_{1:n}$. We therefore wish to find weights that minimizes $\mathcal{B}^2(W_{1:n})$ plus a penalty for deviations from uniform weighting. Formally, we want to solve 
% \underset{w \in \mathcal{W}}{\min} {\textstyle\min_{w \in \mathcal{W}}}
\begin{equation}\label{cmse}
	\begin{aligned}
		\underset{W_{1:n} \in \mathcal{W}}{\min} \quad
    % \left\{  \mathcal{C}(W,  \lambda) 
    % &= \frac12\sum_{t=1}^T \left(\Delta^{(t)}_0(W)^2+\Delta^{(t)}_1(W)^2\right) + \lambda\|W-e\|_2^2 \\
		 % = 
     \mathcal{B}^2(W_{1:n}) + \lambda \|W_{1:n}-e\|_2^2,
     % \right\} ,
	\end{aligned}
\end{equation}

\noindent
where $e$ is the vector of ones and $\mathcal{W}= \lbrace W_{1:n}:W_i \geq 0\;\forall i \rbrace$ is the space of nonnegative weights $W_{1:n}$. 
The squared distance of the weights from uniform weights here serves as a convex surrogate for the variance of the resulting MSM (assuming homoskedasticity or bounded residual variances) and $\lambda$ in eq.~\eqref{cmse} can be interpreted as a penalization parameter that controls the trade off between imbalance and precision.
% From eq.~\eqref{cmse}, $\mathcal{C}(W,\lambda)$ can be thought as a convex surrogate of the mean squared error of the resulting MSM and $\lambda$ can be interpreted as a penalization parameter that controls the trade off between bias and variance. 
When $\lambda$ is equal to zero, the obtained weights provide minimal imbalance. When $\lambda \rightarrow \infty$, the weights become uniformly distributed leading to an ordinary least squares estimator for the MSM. 

In the next section, we discuss a specific choice of the norm that specified the worst case discrepancies $\Delta_{a_t}^{(t)}(W_{1:n})$, presented in Section \ref{nswci}. Specifically, we show that by choosing an RKHS to specify the norm, we can express the optimization problem in eq.~\eqref{cmse} as a convex-quadratic function in $W_{1:n}$, which can be easily solved by using off-the-shelf solvers for quadratic optimization.

%Now, the last two things missing to do are (a) choose a norm that specified the worst-case discrepancies presented in the previous Section and (b) In the next Section we show that by choosing an RKHS to specify the norm that specified the worst-case discrepancies leshow that by using an  RKHS  to  specify  the  norm yields to solve the optimization problem in eq.~\eqref{cmse} by using off-the-shelf solvers for quadratic optimization. We show this in the next Section.
%In the next Section we show how to  chose the  norm that specified the worst-case discrepancies presented in the previous Section and find solutions to the optimization problem in eq.~\eqref{cmse} by using off-the-shelf solvers for quadratic optimization.

\subsection{RKHS and quadratic optimization to optimally balance time-dependent confounders}
\label{kernelqp}
An RKHS is a Hilbert space of functions which is associated a kernel (the reproducing kernel). Specifically, any positive semi-definite kernel $\mathcal K:\mathcal Z\times\mathcal Z\to\mathbb R$ on a ground space $\mathcal Z$ defines a Hilbert space given by (the unique completion of) the span of all functions $\mathcal K(z,\cdot)$ for 
$z\in\mathcal Z$, endowed with the inner product $\left<\mathcal K(z,\cdot),\mathcal K(z',\cdot)\right>=\mathcal K(z,z')$.
 Kernels are widely used in machine learning to generalize the structure of conditional expectation functions with many applications in statistics \citep{scholkopf2002learning,berlinet2011reproducing,kallus2016generalized,kallus2018optimal}. Commonly used kernels are the polynomial, Gaussian, and Mat\'ern kernels \citep{scholkopf2002learning}. 

The following theorem shows that if $\|\cdot\|_{(t)}$, the norm that specified the worst case discrepancies $\Delta_{a_t}^{(t)}(W_{1:n})$, is an RKHS norm given by the kernel $\mathcal K_t$, then we can express it as a convex-quadratic function in 
$W_{1:n}$.

\begin{theorem}
\label{thm2}
Define the matrix $K_t\in\mathbb R^{n\times n}$ as $$K_{tij}=\mathcal K_t((\overline A_{i,t-1},\overline X_{it}),(\overline A_{j,t-1},\overline X_{jt}))$$ and note that it is positive semidefinite by definition. Then,
if the norm $\|\cdot\|_{(t)}$ is the RKHS norm given by the kernel $\mathcal K_t$,
% , by the representer property of the kernels and by self-duality of Hilbert spaces, 
the squared worst case discrepancies are
\begin{equation*}
\begin{aligned}
    \Delta^{(1)}_{a_1}(W_{1:n})^2 &= \frac1{n^2}W_{1:n}^TI^{(1)}_{a_1}K_1I^{(1)}_{a_1}W_{1:n}-2e^TK_1I^{(1)}_{a_1}W_{1:n}+e^TK_1e,\\
\Delta^{(t)}_{a_t}(W_{1:n})^2 &= 
\frac1{n^2}W_{1:n}^T(I-I^{(t)}_{a_t}) K_t (I-I^{(t)}_{a_t}) W_{1:n},
\end{aligned}
\end{equation*}
\noindent
where $I$ is the identity matrix and $I^{(t)}_{a_t}$ is the diagonal matrix with $\mathbb I[A_{it}=a_t]$ in its $i^\text{th}$ diagonal entry.
% and $I^{(t)}_{1-a_t}$ is the diagonal matrix with $1-\mathbb I[A_{it}=a_t]$ in its $i^\text{th}$ diagonal entry.
\end{theorem}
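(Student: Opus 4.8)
The plan is to recognize that, for fixed weights $W_{1:n}$, each sample discrepancy $\hat\delta^{(t)}_{a_t}(W_{1:n},h^{(t)})$ is a bounded linear functional of $h^{(t)}$, so the worst case discrepancy $\Delta^{(t)}_{a_t}(W_{1:n})$ is exactly the dual (RKHS) norm of that functional, which can be evaluated in closed form via the reproducing property. First I would rewrite each discrepancy from eq.~\eqref{deltase} as a finite sum $\hat\delta^{(t)}_{a_t}(W_{1:n},h^{(t)})=\sum_{i=1}^n c_i\,h^{(t)}(z_{it})$, where $z_{it}=(\overline A_{i,t-1},\overline X_{it})$ for $t\geq2$ and $z_{i1}=X_{i1}$, and read off the coefficient vector $c$: for $t\geq2$ one has $c_i=\tfrac1n W_i(\I[A_{it}=a_t]-1)$, i.e.\ $c=\tfrac1n(I^{(t)}_{a_t}-I)W_{1:n}$, while for $t=1$ one has $c_i=\tfrac1n(\I[A_{i1}=a_1]W_i-1)$, i.e.\ $c=\tfrac1n(I^{(1)}_{a_1}W_{1:n}-e)$.

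Second, using the reproducing property $h^{(t)}(z)=\langle h^{(t)},\mathcal K_t(z,\cdot)\rangle_{(t)}$, I would collapse the sum into a single inner product $\hat\delta^{(t)}_{a_t}(W_{1:n},h^{(t)})=\langle h^{(t)},r_t\rangle_{(t)}$ with representer $r_t=\sum_{i=1}^n c_i\,\mathcal K_t(z_{it},\cdot)$, which lies in the RKHS. By the Cauchy--Schwarz inequality the supremum over the unit ball $\{\|h^{(t)}\|_{(t)}\leq1\}$ equals $\|r_t\|_{(t)}$, attained at $h^{(t)}=r_t/\|r_t\|_{(t)}$; by absolute homogeneity this coincides with $\sup_{h}\hat\delta/\|h\|_{(t)}$. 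Hence $\Delta^{(t)}_{a_t}(W_{1:n})^2=\|r_t\|^2_{(t)}$.

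Third, I would expand the squared norm with the kernel identity $\langle\mathcal K_t(z_{it},\cdot),\mathcal K_t(z_{jt},\cdot)\rangle_{(t)}=\mathcal K_t(z_{it},z_{jt})=K_{tij}$, so that $\Delta^{(t)}_{a_t}(W_{1:n})^2=\sum_{i,j}c_ic_jK_{tij}=c^TK_tc$. Substituting the two coefficient vectors and using that $I^{(t)}_{a_t}$ and $K_t$ are symmetric yields the claimed forms: for $t\geq2$, the identity $(I^{(t)}_{a_t}-I)K_t(I^{(t)}_{a_t}-I)=(I-I^{(t)}_{a_t})K_t(I-I^{(t)}_{a_t})$ gives $\tfrac1{n^2}W_{1:n}^T(I-I^{(t)}_{a_t})K_t(I-I^{(t)}_{a_t})W_{1:n}$ directly; for $t=1$, expanding $(I^{(1)}_{a_1}W_{1:n}-e)^TK_1(I^{(1)}_{a_1}W_{1:n}-e)$ and merging the two cross terms, which are transposes of each other and hence equal scalars, produces the stated expression with the $-2e^TK_1I^{(1)}_{a_1}W_{1:n}$ term (all three terms carrying the common factor $\tfrac1{n^2}$).

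The main obstacle I anticipate is not the algebra but the careful justification of the dual-norm step when $\|\cdot\|_{(t)}$ is only an extended seminorm, as flagged in Section~\ref{nswci}. I would argue that on the RKHS the seminorm acts as a genuine norm for our purposes: if $\|h\|_{(t)}=0$ then $|h(z)|=|\langle h,\mathcal K_t(z,\cdot)\rangle_{(t)}|\leq\|h\|_{(t)}\|\mathcal K_t(z,\cdot)\|_{(t)}=0$ for every $z$, so any norm-zero element contributes nothing to $\hat\delta$; and functions outside the RKHS receive norm $\infty$, hence lie outside the unit ball and cannot raise the supremum. This guarantees the supremum is finite and exactly $\|r_t\|_{(t)}$, so each squared worst case discrepancy is a well-defined, positive-semidefinite quadratic form in $W_{1:n}$, which is precisely what renders eq.~\eqref{cmse} a tractable convex quadratic program.
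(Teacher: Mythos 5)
Your proposal is correct and follows essentially the same route as the paper's own proof: rewrite each sample discrepancy as a linear functional with coefficient vector $c$ (namely $\tfrac1n(I^{(1)}_{a_1}W_{1:n}-e)$ for $t=1$ and $\tfrac1n(I^{(t)}_{a_t}-I)W_{1:n}$ for $t\geq2$), identify the worst case discrepancy with the RKHS norm of the representer via the reproducing property and duality (the paper invokes self-duality of Hilbert spaces where you invoke Cauchy--Schwarz with attainment), and expand $c^TK_tc$ into the stated quadratic forms. Your extra care regarding the extended-seminorm convention and your note that the common factor $\tfrac1{n^2}$ multiplies all three terms in the $t=1$ expression (which the paper's display typesets ambiguously) are both correct but do not change the argument.
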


% and let $K=\sum_{t=2}^TK_t$,
% \noindent
% where $e$ is the vector of all ones and 
% Note that $I^{(t)}_{0}+I^{(t)}_{1}=I$ is the identity matrix.
%More detail on eqs.~\eqref{Delta1}--\eqref{Delta2} is given in the Appendix.

Based on Theorem \ref{thm2}, we can now express the worst case imbalance, $\mathcal{B}^2(W_{1:n})$, defined in eq.~\eqref{nswcimb}, as a convex-quadratic function. Specifically, let $K_t^\circ={I^{(t)}_{0}K_tI^{(t)}_{0}+I^{(t)}_{1}K_tI^{(t)}_{1}}$, which is given by setting every entry $i,j$ of $K_t$ to 0 whenever $A_{it}\neq A_{jt}$, and $K^\circ = \sum_{t=1}^TK_t^\circ$.
We then get that
\begin{equation}
	\begin{aligned}
\mathcal{B}^2(W_{1:n})&=\frac1{2}\sum_{t=1}^T(\Delta^{(t)}_0(W_{1:n})^2+\Delta^{(t)}_1(W_{1:n})^2)
\\&
=
\frac1{n^2}\biggl(\frac12W_{1:n}^TK^\circ W_{1:n}-e^TK_1W_{1:n}+e^TK_1e\biggr).
	\end{aligned}
\end{equation}

Finally, to obtain weights that optimally balance covariates to control for time-dependent confounding while controlling precision we solve the quadratic optimization problem, 
\begin{equation}
\label{QP}
	\begin{aligned}
\underset{W_{1:n} \in \mathcal{W}}{\min} \quad & 
% \frac1{n^2}(
\frac12W_{1:n}^TK_{\lambda}^\circ W_{1:n}-e^TK_{\lambda}W_{1:n}
% +e^TK_{\lambda}e
% )
	\end{aligned}
\end{equation}
where $K_{\lambda}^\circ = K^\circ+2\lambda I$, $K_{\lambda}=K_1+2\lambda I$. We call this proposed methodology and the result of eq.~\eqref{QP}, Kernel Optimal Weighting (KOW).

\section{Practical guidelines}
\label{guidelines}

Solutions to the quadratic optimization problem (\ref{QP}) depend on several factors. First, they depend on the choice of the kernel and its hyperparameters. There are some existing practical guidelines on these 
choices \citep{scholkopf2002learning,rasmussen2006gaussian}, on which we 
rely as explained below.
 % for the treatment history and that for the time-invariant confounders and the history of time-dependent confounders. 
 Second, they depend on the penalization parameter $\lambda$. Finally, solutions to eq.~\eqref{QP} depend on the chosen set of lagged covariates to include in each kernel. In this section, we introduce some practical guidelines on how to apply KOW in consideration of these factors. 

% In section \ref{imbalance}, we defined the unknown functions $g_{\overline{a}}^{(t)}(\overline{A}_{t-1},\overline{X}_t)$ 
% for all $t=1, \ldots, T$ 
% as a product of an indicator of treatment history up to time $t-1$ and a conditional expectation of the potential outcome given the time-invariant confounders and the history of time-dependent confounders up to time $t$, \textit{i.e.}, $g_{\overline{a}}^{(t)}(\overline{A}_{t-1},\overline{X}_t) = \mathbbm{1} [\overline{A}_{t-1}=\overline{a}_{t-1}] \mathbbm{E} [ Y(\overline{a}) \mid \overline{X}_t ]$. In this paper we estimate $g_{\overline{a}}^{(t)}(\overline{A}_{t-1},\overline{X}_t)$ by using a product of two kernels, one for the indicator of the treatment history, $\mathbbm{1} [\overline{A}_{t-1}=\overline{a}_{t-1}]$ and one for the conditional expectation of $Y(\overline{a})$ given the time-invariant confounders and the history of time-dependent confounders, $\mathbbm{E} [ Y(\overline{a}) \mid \overline{X}_t ]$. 

For each $t$, the unknown function $g_{\overline{a}}^{(t)}(\overline{A}_{t-1},\overline{X}_t)$ has two distinct inputs: the treatment history and the confounder history. To reflect this structure, we suggest to specify the kernel $\mathcal K_t$ as a \emph{product kernel}, \textit{i.e.},\break $\mathcal K_t((\overline a_{t-1},\overline x_t),(\overline a'_{t-1},\overline x'_t))=\mathcal K_t^{(1)}(\overline a_{t-1},\overline a'_{t-1})\mathcal K_t^{(2)}(\overline x_{t},\overline x'_{t})$ given a treatment history kernel $\mathcal K_t^{(1)}$ and a confounder history kernel $\mathcal K_t^{(2)}$. This simplifies the process of specifying the kernels. We further suggest that for the treatment history to use a linear kernel involving $\ell$ lagged treatments, $\mathcal K_t^{(1)}(\overline a_{t-1},\overline a'_{t-1})=\sum_{s=\max(1,t-\ell)}^{t-1}a_sa'_s$, and for the confounder history to use a polynomial kernel involving the time-invariant confounders and $\ell$ lagged time-dependent confounders, $\mathcal K_t^{(d)}(\overline x_{t},\overline x'_{t})=(1+\theta x_1^Tx'_1+\theta\sum_{s=\max(2,t-\ell+1)}^{t}x_t^Tx'_t)^d$, where $\theta>0$ and $d\in\mathbb N$ are hyperparameters. We discuss the choice of the number of lags and the hyperparameters below. 
In our simulation study in Section \ref{simu}, we show that the MSE of the 
% cumulative effect of a time-varying treatment estimated using an 
MSM-estimated effect using KOW with a product of linear kernel and a quadratic kernel ($d=2$) outperforms estimates using weights obtained by IPTW, sIPTW and CBPS in all considered simulated scenarios. 
We again use this choice of kernels in our empirical applications of KOW to real datasets in Section \ref{empirics}.
Many other choices of kernel are also possible and may be more appropriate in a particular application, but we suggest the above combination as a generic and successful recipe.

When using kernels, preprocessing the data is an important step. In particular, normalization is employed to avoid unit dependence and covariates with high variance dominating those with smaller ones. Consequently, we suggest, beforehand, to scale the covariates related to the treatment and confounder histories to have mean 0 and variance 1.  % As a more general alternative, we suggest to use a product of Gaussian kernels, which are $\mathcal{C}_0$-universal kernels and therefore should ensure consistency even under incorrect specification of the kernels \citep{kallus2016generalized}.  

% \todo{maybe more explanation}
To tune the kernels' hyperparameters and the penalization parameter $\lambda$, we follow \cite{kallus2016generalized} and use the empirical Bayes approach of marginal likelihood \citep{rasmussen2006gaussian}. 
We postulate a Gaussian process prior $g^{(t)} \sim \mathcal{GP}(c_t\mathbf 1,\mathcal{K}_{t}(\theta))$, where $c_t\mathbf 1$ is a constant function and $\mathcal{K}_{t}(\theta_t)$ is a kernel that depends on some set of hyperparameters $\theta_t$. For each $t$, we then maximize the marginal likelihood of seeing the data $Y \sim \mathcal{N}(g^{(t)}(\overline X_t,\overline A_{t-1}), \lambda_t)$ over $\theta_t,\lambda_t,c_t$ and let $\lambda=\sum_{t=1}^T\lambda_t$. It would be more correct to consider the marginal likelihood of observing the partial means of outcomes, but we find that this much simpler approach suffices for learning the right representation of the data ($\theta_t$) and the right penalization parameter ($\lambda$) and it enables the use of existing packages such as \textsf{GPML} \citep{rasmussen2010gaussian}. We demonstrate this 
in the simulations presented in Section \ref{simu}, and in particular in Figures \ref{fig1b} and \ref{fig2b}
we see that this approach leads to a value of the penalization parameter that is near that which minimizes the resulting MSE of the MSM over possible parameters.
% as well as in 
% We postulate the prior $Y(\overline a) \sim \mathcal{N}(g_{\overline a}^{(T)}(\overline X_T,\overline A_T), \lambda)$, with $g_{\overline a}^{(T)}=\sum_{t=1}^T g_{\overline a}^{(t)}$ 
% and $\lambda=\sum_{t=1}^T \sigma^2_t$, 
% and a  We consequently obtain the kernels' hyperparameters and the penalization parameter by maximizing the marginal likelihood of the data with respect to $\theta$ and $\sigma^2_{t}$  at each time period. 
% In the simulations presented in Section \ref{simu}, and in particular in Figures \ref{fig1b} and \ref{fig2b}, 
%In particular, in Figures \ref{fig1b} and \ref{fig2b}, we show that the mean value of the penalization parameter $\lambda$ across simulations, computed by using Gaussian processes, is near the point in which the mean squared error of the estimated cumulative effect of a time-varying treatment, obtained by using KOW, has its lowest value. 

% \textcolor{blue}{NK Comment: Shouldn't the Bayesian specification for marginal likelihood be something like $(Y(\overline a)\mid\overline A_{T-1},\overline X_T) \sim \mathcal{N}(g_{\overline a}^{(T)}(\overline A_{T-1},\overline X_T), \sigma^2_T)$, $(g_{\overline a}^{(t)}(\overline A_{t-1},\overline X_t))\mid\overline A_{t-2},\overline X_{t-1}) \sim \mathcal{N}(g_{\overline a}^{(t-1)}(\overline A_{t-2},\overline X_{t-1})), \sigma_{t-1})$ for $t>1$, $g_{\overline{a}}^{(t)} \sim \mathcal{GP}(0,\mathcal{K}_{t})$, $\lambda=\sum_{t=1}^T\sigma_t$.}

Another practical concern is how many lagged covariates to include in each of the kernels $\mathcal{K}_{t}$. When deriving inverse probability weights, it is common to model the denominator in eq.~\eqref{sipweights} by fitting a pooled logistic model \citep{d1990relation} including only the time-invariant confounders, $X_{1}$, the time-dependent confounders at time $t$, $X_{t}$, and the one-time lagged treatment history, $A_{t-1}$, rather than the entire histories, \textit{i.e.}, logit $\mathbb P(A_{t}=a_t \mid \overline{A}_{t-1}=\overline a_{t-1}, \overline{X}_{t}=\overline x_t )=\alpha_t + \beta_1 A_{t-1} + \beta_2 X_{1} + \beta_3 X_{t}$, \citep{hernan2001marginal,hernan2002estimating}. This can be understood as a certain Markovian assumption about the data generating process which simplifies the modeling when $T$ is large. The same can be done in the case of KOW, where we may assume that $g_{\overline{a}}^{(t)}$ is only a function of the one-time lagged treatment, the time-dependent counfounders at time $t$, and the time-invariant confounders, \textit{i.e.}, $g_{\overline{a}}^{(t)}(\overline A_{t-1},\overline X_t)=g_{\overline{a}}^{(t)}(A_{t-1},X_1,X_t)$, and correspondingly let the kernel $K_t$ only depend on $A_{t-1}$, $X_1$, and $X_t$.
% the time-invariant confounders $X_1$, the one-time lagged treatment, $A_{t-1}$ and the time-dependent counfounders at time $t$, $X_t$. 
More generally, we can consider including any amount of lagged variables, as represented by the parameter $\ell$ in the above specification of the linear and polynomial kernels. In Section \ref{caseblack}, we consider an empirical setting where $T$ is small and specify the kernels using the whole treatment and confounders histories ($\ell=T$). However, in Section \ref{casehiv} we consider a setting where $T$ is large and, following previous approaches studying the same dataset using IPTW with a logistic model of only the one-time lags \citep{hernan2000marginal,hernan2001marginal,hernan2002estimating}, we keep only the baseline and one-time-lagged data in each kernel specification ($\ell=1$).

%This formulation is convenient when the number of time periods becomes large, for example when $T>3$. Consequently, as shown in our empirical application in Section \ref{casehiv}, when $T$ is large, we suggest to balance only the one-time lagged treatment and the time-dependent counfounders at time $t$, \textit{i.e.} $g_{\overline{a}}^{(t)}(A_{t-1},X_t)$. Conversely, as shown in Section \ref{caseblack}, when $T$ is relatively small, we suggest to balance the whole histories, \textit{i.e.} $g_{\overline{a}}^{(t)}(\overline A_{t-1}, \overline X_t)$.

Certain datasets, such as the one we study in Section \ref{casehiv}, have repeated observations of outcomes at each time $t=1, \ldots, T$. Thus, for each subject, we have $T$ observations to be used to fit the MSM. Correspondingly, each observation should be weighted appropriately. This can be seen as $T$ instances of the weighting problem. For sIPTW, this boils down to restricting the products in the numerator and denominator of eq.~\eqref{sipweights} to be only up to $t$ for each $t=1, \ldots, T$. Similarly, in the case of KOW, we propose to solve eq.~\eqref{QP} for each value of $t=1, \ldots, T$, producing $n \times T$ weights, one for each of the outcome observations, to be used in fitting the MSM. This is demonstrated in Section \ref{casehiv}. 

In the case of a single, final observation of outcome, normalizing the weights, whether IPTW or KOW, does not affect the fitted MSM as it amounts to multiplying the least-squares loss by a constant factor. 
But in the repeated observation setting described above, normalizing each set of weights for each time period separately can help. 
% Indeed, a necessary condition for correct model specification when using sIPTW is that the stabilized weights have mean one \citep{cole2008constructing}.
Correspondingly, we can add a constraint to eq.~\eqref{QP} that the mean of the weights must equal one for each time period separately, which we 
% do in our empirical example with repeated observations in 
demonstrate in
Section \ref{casehiv}.

\section{Simulations}
\label{simu}
In this section, we show the results of a simulation study aimed at comparing the bias and MSE of estimating the cumulative effect of a time-varying treatment on a continuous outcome by using an MSM with weights obtained by each of KOW, IPTW, sIPTW, and CBPS.

\subsection{Setup}
\label{simu_setup}

We considered two different simulated scenarios with $T=3$ time periods, (1) linear, where the treatment was modeled linearly, and (2) nonlinear, where it was modeled quadratically. In both scenarios, we modeled the outcomes nonlinearly so as not to favor our method unfairly.
We tuned the kernel's hyperparameters and the penalization parameter as presented in Section \ref{guidelines} and computed bias and MSE over 1{,}000 replications for each of varying sample sizes, $n=100, \ldots, 1{,}000$. In addition, to study the impact of the penalization parameter $\lambda$ on bias and MSE, in both scenarios, we fixed the sample size to $n=500$ and considered a grid of 25 values for $\lambda$.
%\textcolor{blue}{(NK Question: before I removed it, this said $\lambda_t=...$ etc and for $t=1,2,3$ -- does this mean that the actual penalization parameter we use is $3\times$ the values reported? Or is it correct as it is now presented?)}  \textcolor{green}{MS Answer: Yes, $\lambda=3 \times \lambda_t$, therefore by fixing $\lambda_t$, we fix $\lambda$. I changed the figure accordingly} 

% \textcolor{red}
{For the linear scenario, we drew the data from the following model: }
\begin{equation*}
    \begin{aligned}
\label{simu_l}
\notag Y_i &= -1.91 + 0.8 \textstyle \sum_{t=1}^T A_{i,t} + 0.5   \textstyle \sum_{k=1}^3 Z_{i,k} +  0.05 \textstyle \sum_{{k\neq m}} Z_{i,k}Z_{i,m} + N(0, 5),  
\end{aligned}
\end{equation*}
where $Z_{i,k} = \sum_{t=1}^T X_{i,t,k}$, $A_{i,t} \sim \text{binom}(\pi_{i,t}^\text{(L)})$, $X_{i,t,k} \sim N(X_{i,t-1,k} + 0.1,1)$, $k=1,2,3$, and
\begin{equation*}
\begin{aligned}
\pi_{i,t}^\text{(L)} &= (1 + \exp(-(0.5 + 0.5 A_{i,t-1} + 0.05X_{i,t,1}+ 0.08X_{i,t,2} -0.03X_{i,t,3} \\\notag &+  0.2 A_{i,t-1}\textstyle \sum_{k=1}^3 X_{i,t,k})))^{-1} .
\end{aligned}
\end{equation*}

For the nonlinear scenario, we drew the data from following model:
\begin{equation*}
\begin{aligned}
\label{simu_nl}
\notag
Y_i &= -21.46 + 0.8 \textstyle \sum_{t=1}^T A_{i,t} + 0.5  \textstyle \sum_{k=1}^3 Z_{i,k} +  0.1 (\textstyle \sum_{{k\neq m}} Z_{i,k}Z_{i,m}) + N(0, 5),
\end{aligned}
\end{equation*}
\noindent
where $Z_{i,k} = \sum_{t=1}^T X_{i,t,k}^2$, $A_{i,t} \sim \text{binom}(\pi_{i,t}^\text{(NL)})$, $X_{i,t,k} \sim N(X_{i,t-1,k} + 0.1,1)$, $k=1,2,3$ and 
\begin{equation*}
\begin{aligned}
\pi_{i,t}^\text{(NL)} &= (1 + \exp(- (0.5 + 0.5 A_{i,t-1} +
0.05X_{i,t,1} + 0.08X_{i,t,2} -0.03X_{i,t,3} \\\notag &+
0.025X^2_{i,t,1} + 0.04X^2_{i,t,2} -0.015X^2_{i,t,3}
+   0.3 \textstyle \sum_{k \neq m} X_{i,t,k} X_{i,t,m} \\\notag &+  0.2 A_{i,t-1} \textstyle \sum_{k=1}^3 X_{i,t,k} +   0.1 A_{i,t-1} \textstyle \sum_{k=1}^3 X^2_{i,t,k} \\\notag &+ 0.05 A_{i,t-1} \textstyle \sum_{k \neq m} X_{i,t,k} X_{i,t,m})))^{-1}.
\end{aligned}
\end{equation*} 
The intercepts $-1.91$ and $-21.46$ are chosen so the marginal mean of $Y_i$ is 0.
%\textcolor{blue}{NK: Not clear what $Y_i$ is in both cases since there's a vector $\mathbf Z_i$ on the right hand sides in each case. Please fix. I removed all vector and wrote the model explicitly but didn't know what to do with that $\mathbf Z_i$ (was it intended to be $e^T\mathbf Z_i$?). Also double check the formulas.}\msedit{MS: Fixed, $\mathbf{Z}_i$ was the vector $Z_{i,1}+Z_{i,2}+Z_{i,3}$. }
% 
% 
% \textcolor{blue}{(NK Question: again, this now looks like $y$ is quartic in covariates)}  \textcolor{green}{MS Answer: There was a mistake. I double checked in the simulations' code and fixed.} 

In each scenario and for each replication, we computed two sets of KOW weights. We obtain the first by using the product of two linear kernels ($\mathcal{K}_1$), one for the treatment history and one for the confounder history, and the second by using the product of a linear kernel for the treatment history and a quadratic kernel for the confounder history ($\mathcal{K}_2$). As presented in Section \ref{guidelines}, we rescaled the variables before inputting them to the kernel and, for each replication, we tuned $\lambda$ and the kernels' hyperparameters by using Gaussian-process marginal likelihood.
%\textcolor{blue}{(NK Question: what were the hyperparameters? did you have a parameter $\theta$ like I had above for the quadratic kernel? others? or only the residual variance was a hyperparam?)}\textcolor{green}{MS Answer: In the linear scenario, yes it was only the residual variance, while in the quadratic it was the residual variance and the scale parameter of the polynomial kerner degreee 2.} 
% 
% 
% In each scenario and for each sample, w
We also computed two sets of IPTW and sIPTW weights. We obtained the first by fitting for each $t=1,2,3$ a logistic regression model for the treatment $\overline A_{i,t}$ conditioned on $\overline A_{i,t-1},\overline X_{i,t}$ and their interactions, which is well-specified for $\pi_{i,t}^\text{(L)}$ (we term this the linear specification) and the second by adding all quadratic confounder terms and their interactions with $\overline A_{i,t-1}$ which is well-specified for $\pi_{i,t}^\text{(NL)}$ (we term this the non-linear specification). 
% 
% the correct logistic regression model for the linear treatment, \textit{i.e.}, $A_{i,t} \sim \text{binom}(\pi_{i,t}^\text{(L)})$ ($\mathcal{I}_1$ and $\mathcal{S}_1$) \textcolor{blue}{(NK Question: does this mean a*x in R?)} \textcolor{green}{MS Answer: Yes} , and, the second, by fitting the correct logistic regression model for the nonlinear treatment $A_{i,t} \sim \text{binom}(\pi_{i,t}^\text{(NL)})$ ($\mathcal{I}_2$ and $\mathcal{S}_2$) \textcolor{blue}{(NK Question: does this mean (a+x)*(a+x) in R?)}.  \textcolor{green}{MS Answer: Yes} 
The numerator of sIPTW in either case was obtained by 
% modeling  in eq. \eqref{sipweights} with 
fitting
a logistic regression 
% conditioned 
on the treatment history alone. We obtain the final set of IPTW and sIPTW weights by multiplying the weights over $t$ as shown in eq. \eqref{sipweights}.
Finally, we computed two sets of weights using CBPS: 
one using the covariates as they are (linear CBPS) 
and one by augmenting the
covariates with all quadratic monomials (non-linear CBPS).
% the correctly specified linear and quadratic CBPS 
% ($\mathcal{C}_1$ and $\mathcal{C}_2$) . 
We used the full (non-approximate) version of CBPS.

We computed the causal parameter of interest by using WLS, regressing the outcome on the cumulative treatment and using weights computed by each of the methods. Specifically, in the linear scenario, we computed weights using (1) $\mathcal{K}_1$ for KOW, the linear specification for IPTW and sIPTW, and linear CBPS, which we refer to as the \textit{correct} case, and (2) $\mathcal{K}_2$ for KOW, the nonlinear specification for IPTW and sIPTW, and the nonlinear CBPS, which we refer to as the \textit{overspecified} case. In the nonlinear scenario, we again used each of the above, but refer to the first as the \textit{misspecified} case and the second as the \textit{correct} case. 
We highlight that these terms may only reflect the model specification for IPTW and sIPTW, as CBPS does not require a particular 
specification and
the function $g_{\overline a}^{(t)}$ 
need not necessarily be
in the RKHS that either kernel specify.

% we computed the causal parameter of interest using (1) $\mathcal{K}_1$ for KOW, $\mathcal{I}_1$ and $\mathcal{S}_1$ for IPTW and sIPTW, and $\mathcal{C}_1$ for CBPS, which we refer to as the \textit{misspecified} scenario, and (2) $\mathcal{K}_2$ for KOW, $\mathcal{I}_2$ and $\mathcal{S}_2$ for IPTW and sIPTW, and $\mathcal{C}_2$ for CBPS, which we refer to as the \textit{correct} nonlinear scenario. \todo{\tiny{Comment on meaning of "correct"}}

%We computed the causal parameter of interest considering (1) a ``correct'' specification of the treatment model, \textit{i.e.}, when the data were generated linearly, we used  $\mathcal{K}_1$ for KOW, linear and interaction terms in the logistic regression model for IPTW and sIPTW, and linear and interaction terms for CBPS, (2) an overspecified model, \textit{i.e.}, when the data were generated linearly, we used  $\mathcal{K}_2$ for KOW, linear, quadratic and interaction terms in the logistic regression model for IPTW and sIPTW, and linear, quadratic and interaction terms for CBPS, and (3) a misspecified model, \textit{i.e.}, when the data were generated nonlinearly, we used  $\mathcal{K}_1$ for KOW, linear and interaction terms in the logistic regression model for IPTW and sIPTW, and linear and interaction terms for CBPS. We fit the MSM using the model of additive effects with a common coefficient estimated by WLS, weighted by KOW, IPTW, sIPTW, or CBPS.

We used \textsf{Gurobi} and its \textsf{R} interface \citep{gurobi} to solve eq.~\eqref{cmse} and optimize the KOW weights, the \textsf{MatLab} package \textsf{GPML} \citep{rasmussen2010gaussian} to perform the marginal likelihood estimation of hyperparameters, {the \textsf{R} package \textsf{R.matlab} to call \textsf{MatLab} from within \textsf{R}}, the \textsf{R} command \textsf{glm} to fit treatment models for IPTW and sIPTW, the \textsf{R} package \textsf{CBMSM} for CBPS, and the \textsf{R} command \textsf{lm} to fit the MSM.

\subsection{Results}

In this section we discuss the results obtained in the simulation study across sample sizes and across values of the penalization parameter, $\lambda$.  In summary, the proposed KOW outperformed IPTW, sIPTW and CBPS with respect to MSE across all sample sizes and simulation scenarios. An important result is that, in the misspecified case, KOW showed a lower bias and MSE than that of IPTW, sIPTW and CBPS across all considered sample sizes. 

\subsubsection{Across sample sizes}
\label{simu_ss}

Figure \ref{fig1} shows bias and MSE of the estimated time-varying treatment effect using KOW (solid), 
IPTW (dashed), sIPTW (dotted), and CBPS (dashed-dotted) when increasing the sample size  from $n=100$ 
to $n=1{,}000$. In the linear-correct scenario, IPTW had a lower bias compared with sIPTW, CBPS and KOW in small samples (top-left panel of Figure \ref{fig1}). 
However, for larger samples, KOW had a smaller bias compared with IPTW, sIPTW and CBPS. KOW outperformed IPTW, sIPTW and CBPS in terms of MSE across samples sizes (top-right panel of Figure \ref{fig1}). KOW outperformed the other methods with regards of MSE (bottom-right panel of Figure \ref{fig1}) across all sample sizes, in the linear-overspecified scenario.  KOW and sIPTW performed similarly with respect to bias in the nonlinear-misspecified scenario (top-left panel of Figure \ref{fig2}), while KOW outperformed IPTW, sIPTW and CBPS with respect to MSE in all sample sizes (top-right panel of Figure \ref{fig2}). KOW, IPTW and sIPTW had similar bias in the nonlinear-correct scenario (bottom-left panel of Figure \ref{fig2}), with KOW outperforming the other methods, with respect of MSE, across all sample sizes (bottom-right panel of Figure \ref{fig2}). In summary, the MSE obtained by using KOW was lower than that of IPTW, sIPTW and CBPS across all considered sample sizes.
As the next section shows, the larger biases in some of the cases are driven by the choice of penalization parameter $\lambda$. Here we choose $\lambda$ with an eye toward minimizing MSE. A smaller $\lambda$, it is shown next, can lead to KOW having \emph{both} smaller bias and MSE than other methods, but the total benefit in MSE is smaller.

\begin{figure}[H] 
\begin{center}
\includegraphics[scale=.6]{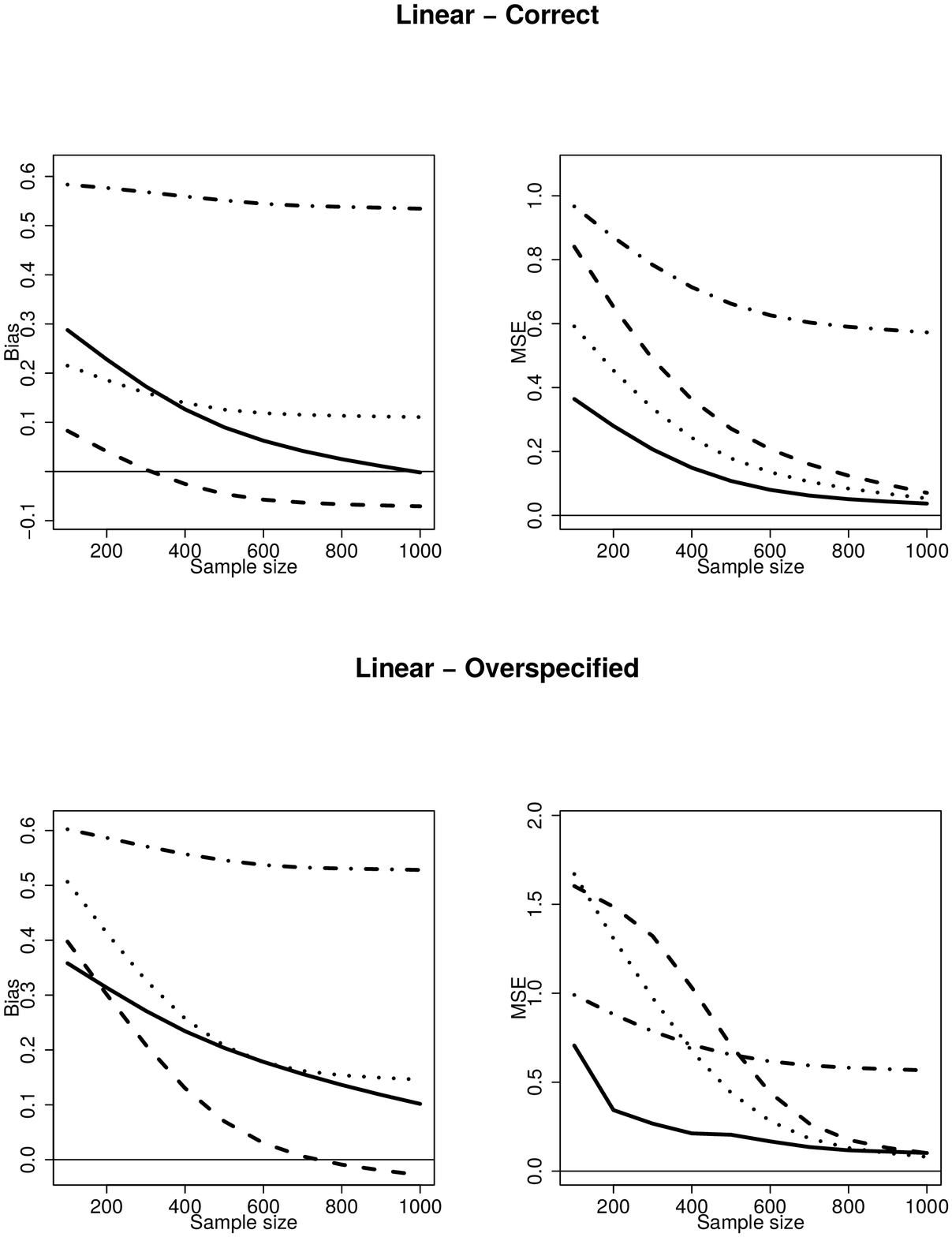}
\end{center}
\caption{\footnotesize Bias and MSE of the estimated time-varying treatment effect using KOW (solid), IPTW (dashed), sIPTW (dotted) and CBPS (dashed-dotted) when increasing the sample size from $n=100$ to $n=1{,}000$ in the linear-correct scenario (top panels) and in the linear-overspecified scenario (bottom panels).
\label{fig1} }
\end{figure}

\begin{figure}[H] 
\begin{center}
\includegraphics[scale=.6]{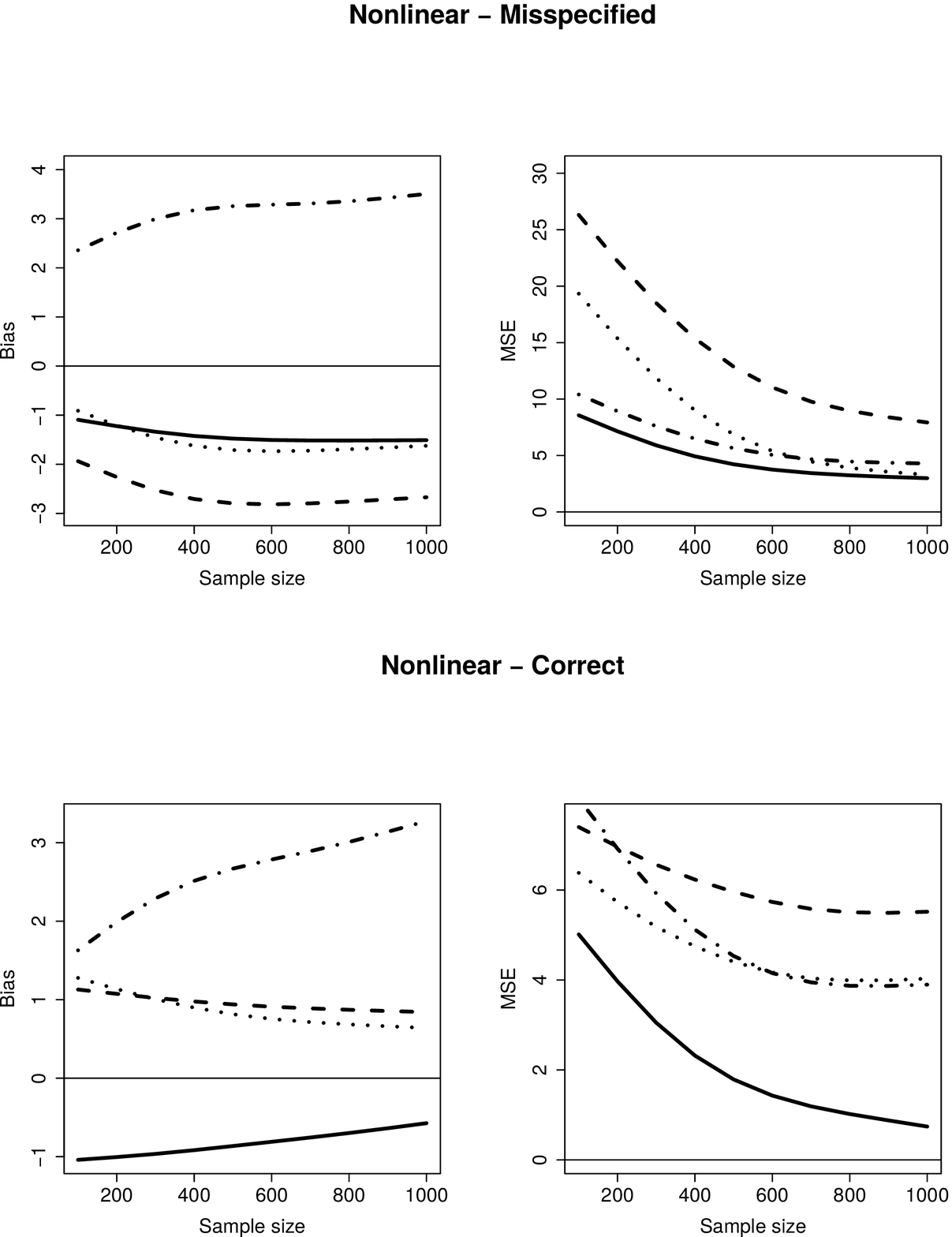}
\end{center}
\caption{\footnotesize Bias and MSE of the estimated time-varying treatment effect using KOW (solid), IPTW (dashed), sIPTW (dotted) and CBPS (dashed-dotted) when increasing the sample size from $n=100$ to $n=1{,}000$, in the nonlinear-misspecified scenario (top panels) and in the nonlinear-correct scenario (bottom panels).
\label{fig2} }
\end{figure}

\subsubsection{Across values of the penalization parameter, $\lambda$}

Figures \ref{fig1b} and \ref{fig2b} show the ratios of squared biases (left panels) and of MSEs (right panels) when comparing KOW with IPTW (solid), sIPTW (dashed) and CBPS (dotted) across different values of $\lambda$
% , when the sample size was set to be equal to 
and $n=500$ in the linear and nonlinear scenarios, respectively.
% and the data were generated linearly. 
Values above 1 means that KOW had a lower bias or MSE. 
% For $\lambda$ equal to zero, bias was minimized and the MSE was completely explained by the variance, while, for large values of $\lambda$, less variance is traded off for increased bias.
For zero or small $\lambda$, KOW significantly outperformed IPTW, sIPTW and CBPS with respect to bias. In many cases, the MSE was also smaller for zero $\lambda$. But, the biggest benefit in MSE was seen for larger $\lambda$.
% The right panels of Figure \ref{fig1b} show the behavior of the ratios of MSEs.
% , in which the bias-variance trade-off can be evaluated. 
% , leading to the MSE of the unweighted estimator. 
The peaks of the right panels represent the points for which $\lambda$ is optimal, \textit{i.e.}, the MSE of KOW is minimized. The solid vertical lines on the right panels show the mean values across replications of the $\lambda$ value obtained by the procedure described in Section \ref{guidelines} and \ref{simu_setup} as done in the previous section. It can be seen that these are very near the points at which the MSE is minimized. The benefit in MSE both at and around this point was significant across all scenarios.
% When the models were misspecified, KOW outperformed the other methods with respect to both bias and MSE (top panels of Figure \ref{fig2b}). KOW outperformed IPTW, sIPTW and CBPS, when comparing MSEs in the nonlinear correct scenario. 

\begin{figure}[H]
\begin{center}
\includegraphics[scale=.6]{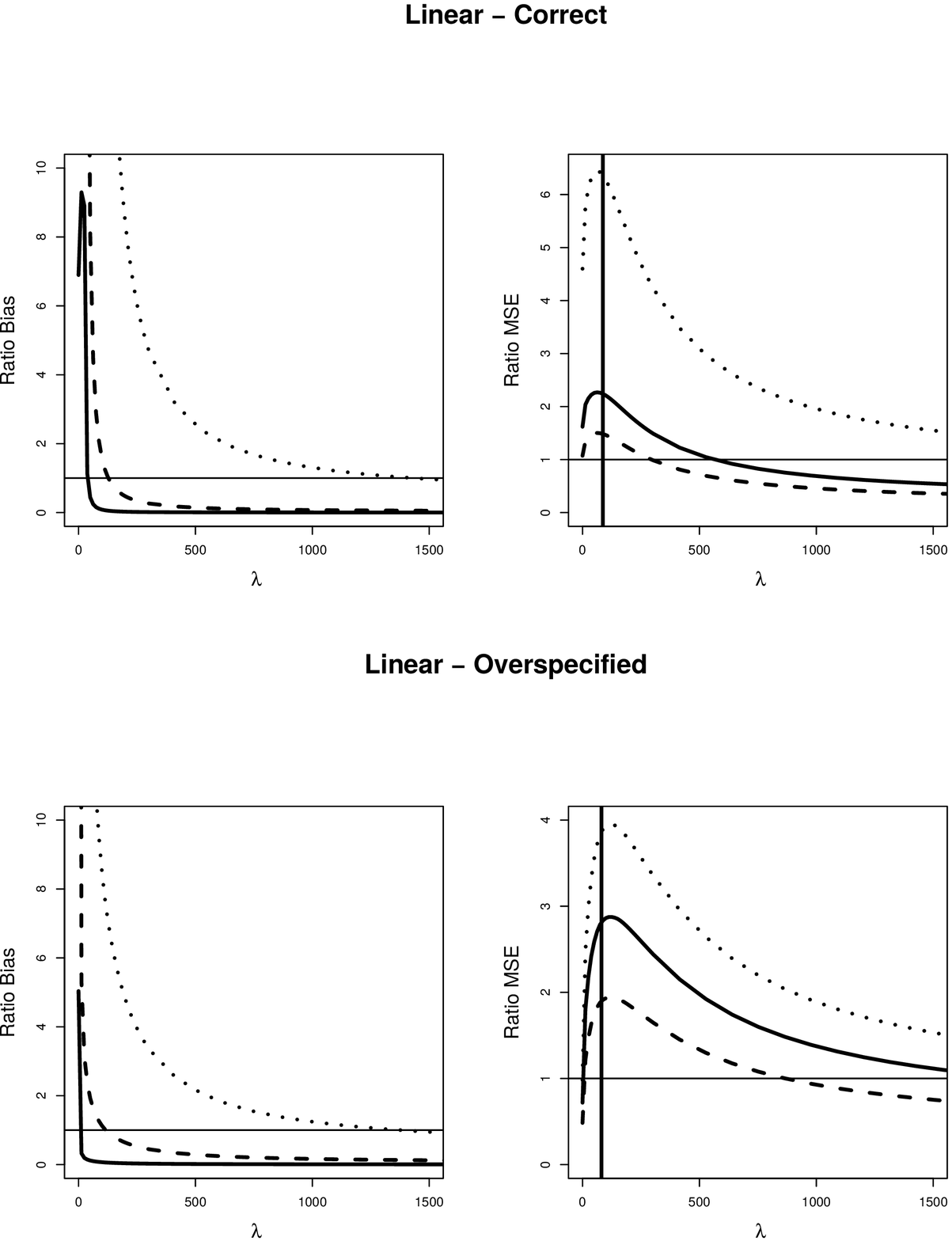}
\end{center}
\caption{\footnotesize Ratios of squared biases and MSEs comparing KOW with IPTW (solid), sIPTW (dashed) and CBPS (dotted) across values of $\lambda=0, \ldots, 1500$ in the linear-correct scenario (top panels) and  in the linear-overspecified scenario (bottom panels). Ratios above 1 means that KOW had a lower bias or MSE. Vertical bars show the mean value of $\lambda$, across simulations, obtained as described in Section \ref{simu_ss}. 
\label{fig1b} }
\end{figure}

\begin{figure}[H] 
\begin{center}
\includegraphics[scale=.6]{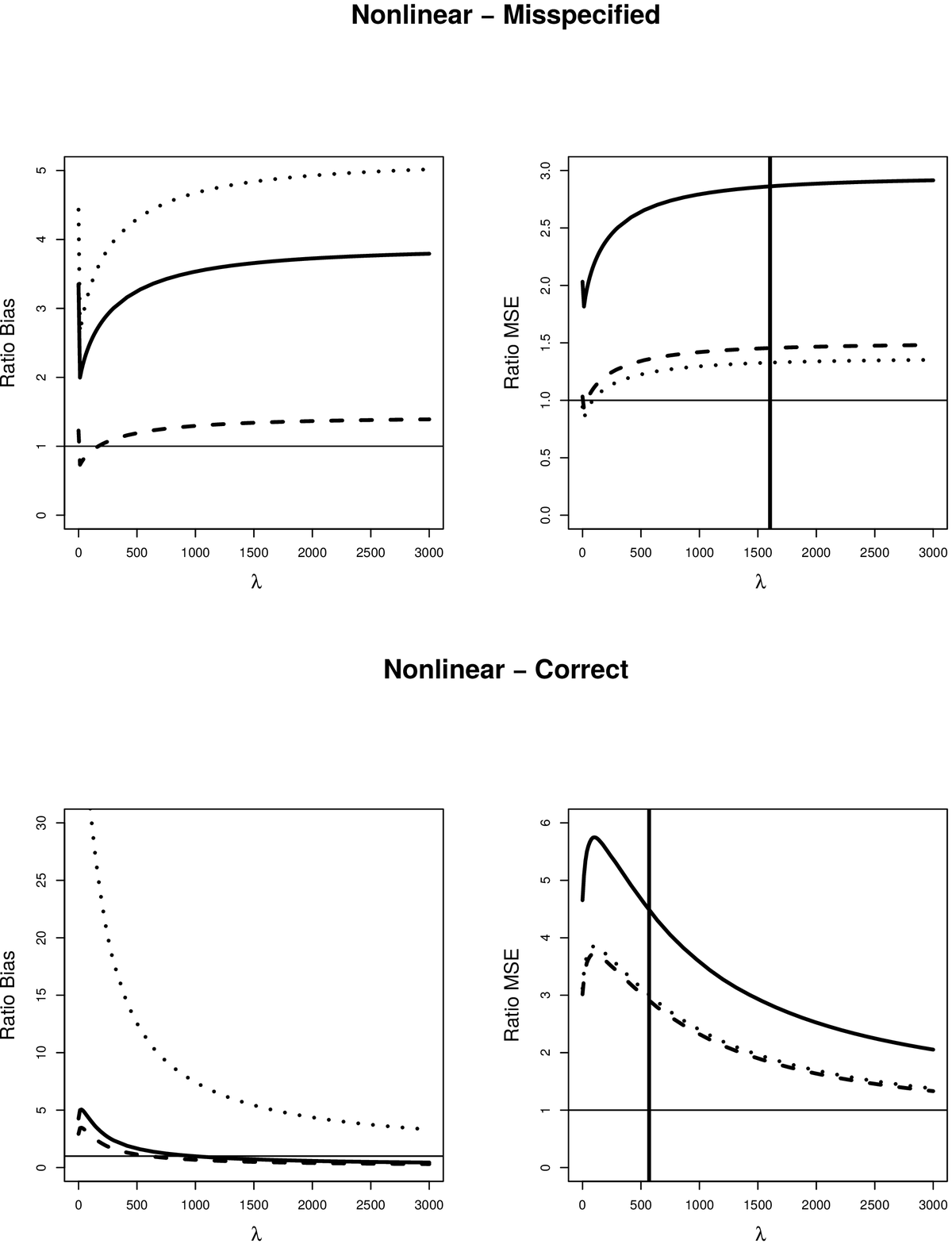}
\end{center}
\caption{\footnotesize Ratios of squared biases and MSEs comparing KOW with IPTW (solid), sIPTW (dashed) and CBPS (dotted) across values of $\lambda=0, \ldots, 3{,}000$ in the nonlinear-misspecified scenario (top panels) and in the nonlinear-correct scenario (bottom panels). Ratios above 1 means that KOW had a lower bias or MSE. Vertical bars show the mean value of $\lambda_=0, \ldots, 3{,}000$, across simulations, obtained as described in Section \ref{simu_ss}.
\label{fig2b} }
\end{figure}

\subsubsection{Computational time of KOW}
\label{simu_comp_time}
% \textit{i.e.}, the parameters of the pooled logistic regression are simply obtained by using existing softwares like the \textsf{glm} package in \textsf{R}. 

In this section we present the results of a simulation study aimed at comparing the mean computational time of KOW and CBPS. Compared to sIPTW based on pooled logistic regression, which is generally very fast, both KOW and CBPS have a nontrivial computational time that can grow with both the total number time periods $T$ and the number of covariates (which, for KOW, manifests as the complexity of the kernel functions). For KOW, the most time-consuming tasks are tuning $\lambda$ by marginal likelihood and computing the matrices that define problem \eqref{QP}, which are affected by these two factors, while solving problem \eqref{QP} is fast and does not depend on those factors. CBPS computational time is dominated by inverting a covariance matrix with dimensions increasing exponentially in $T$ and linearly in the number of covariates. \cite{imai2015robust} also propose using an approximate low-rank matrix that ignores certain covariance terms to make the matrix inversion faster.

Here we compare KOW, CBPS with full covariance matrix (CBPS-full), and CBPS with its low-rank approximation (CBPS-approx) when increasing the number of time periods and the number of covariates. Specifically, following the linear-correct scenario presented in Section \ref{simu_setup}, we fixed the sample size equal to $n=100$ and randomly generated 100 samples considering $T=3, \ldots, 10$, and $p=3,\ldots,8$, where $p$ is the total number of covariates $X_{t}$ for each $t$. We fixed the number of covariates to be equal to $p=3$ when evaluating the mean computational times over time periods, while we fixed the number of time periods to be equal to $T=5$ when analyzing over the number of covariates.
For each sample, we computed the KOW weights by solving eq.~\eqref{QP} using kernel $\mathcal K_1$. We used Gaussian process marginal likelihood to tune the kernels' hyperparameters and penalization parameter. We computed CBPS weights using the linear CBPS as in Section \ref{simu_setup}. We used the \textsf{R} package \textsf{rbenchmark} to compute the computational time on a PC with an i7-3770 processor, 3.4 GHz, 8GB RAM and a Linux Ubuntu 16.04 operating system. 

Solid lines of Figure  \ref{fig3} represent mean computational times for KOW, dashed for CBPS-full, and dotted for CBPS-approx.  When the number of time periods was relatively small, the mean computational time of KOW was higher compared with both CBPS methods (left panel of Figure \ref{fig3}). However, the mean computation time of KOW over time periods increased linearly while that of both CBPS methods increased exponentially. This is due to the fact that, as presented in Section \ref{imbalance}, the number of imbalances that we need to  minimize grows linearly in the number of time periods. The mean computational time required by KOW when increasing the number of covariates remained constant, while it increased for both CBPS-full and CBPS-approx, with CBPS-full increasing more rapidly. In summary, KOW was less affected by the total number of time periods and covariates compared with CBPS with full and low-rank approximation matrix. 

Computing KOW required three steps: tuning the parameters, constructing the matrices for problem \eqref{QP}, and solving problem \eqref{QP}. On average, for $T=3$, the first step required 21\% of the total computational time, the second 78.8\%, and the last 0.2\%.
Thus, solving the optimization problem itself is very fast and is not the bottleneck. 
% This also suggests that better coding for constructing the matrices (currently done inside \textsf{R}) can further improve performance.

 % \textcolor{blue}{fill in XX\% MS: when T=3, increasing the time periods the only time for GPML and gurobi remains the same while the time for constructing the matrices increases quite a lot.}

%$e.g.$, $T<6$ for CBPS-full and $T<10$ for CBPS-approx, 

\begin{figure}[h!] 
\begin{center}
\includegraphics[trim={0 13cm 0 0},clip,scale=.6]{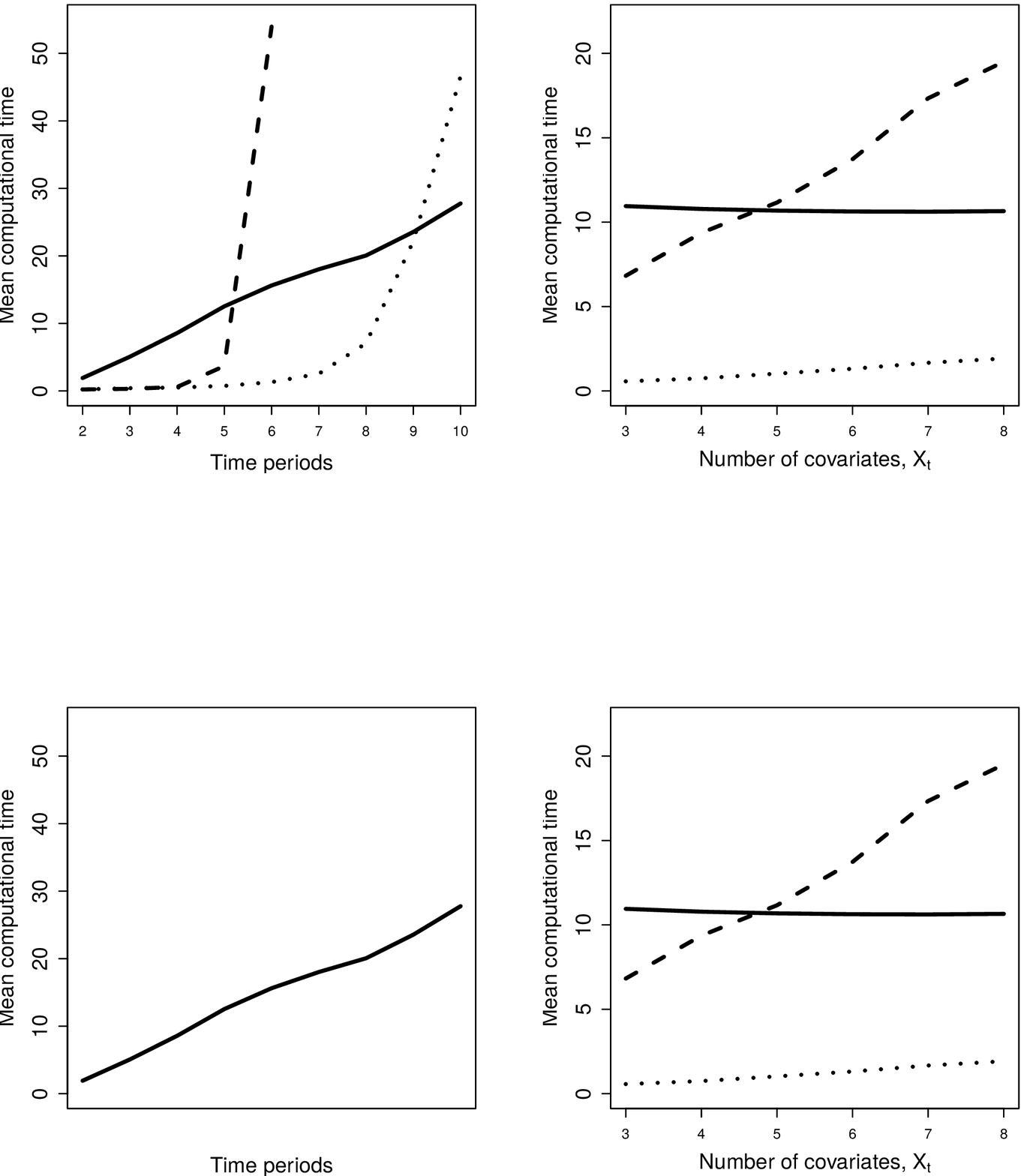}
\end{center}
\caption{{\footnotesize Mean computational time in seconds of KOW (solid), CBPS with full covariate matrix (dashed), and CBPS with the low-rank approximation of the full matrix (dotted) over time periods when $n=100$, $p=3$ and $T=2,\ldots,10$ (left panel) and over number of covariates, when $n=100$, $T=5$ and $p=3,\ldots,8$ (right panel).}
\label{fig3} }
\end{figure}

\section{
% \textcolor{red}
{KOW with informative censoring}}
\label{censor}

%In Section \ref{deco} we minimized the imbalance defined as the sum of absolute empirical discrepancies between the weighted observed data and the counterfactual of interest. The key idea was to balance treatment and confounders histories related to the treatment assignment mechanism. 

%where the interest is in the estimation of the effect of a treatment on a time-to-event outcome such as death

In longitudinal studies, participants may drop out the study before the end of the follow-up time and their outcomes are, naturally, missing observations. When this missingness is due to reasons related to the study (\textit{i.e.}, related to the potential outcomes), selection bias is introduced. This phenomenon is referred to as informative censoring and it is common in the context of survival analysis where the interest is on analyzing time-to-event outcomes. Under the assumptions of consistency, positivity, and sequential ignorability of both treatment and censoring, \cite{robins1999estimation} showed that a consistent estimate of the causal effect of a time-varying treatment can be obtained by weighting each subject 
$i=1, \ldots, n$ 
at each time period 
% $t=1, \ldots, T$, 
by the product of inverse probability of treatment and censoring weights. Inverse probability of treatment weights are obtained as presented in Section \ref{sec:rew_msm}, while inverse probability of censoring weights are usually obtained by inverting the probability of being uncensored at time $t$, given the treatment and confounder history up to time $t$ \citep{hernan2001marginal}.  

% We let $C_t\in\{0,1\}$ indicate censoring at time $t$
% Sequential ignorability of 
% Similarly to sequential ignorability in eq.~\eqref{igno}, censoring is sequentially ignorable if the potential failure time and censoring at time $t$, among alive and uncensored subjects up to time $t$, are independent   \citep{hernan2001marginal}. Formally, $Y_i(\overline{a}) \independent C_{it} \mid \overline{A}_{i,t-1}, \overline{X}_{it}$ $\forall \ t \in \lbrace 1, \ldots, T \rbrace,\ i \in \lbrace 1, \ldots, n \rbrace$, where $Y_i(\overline{a})$ is the failure time of an $\overline{a}$-treated subject when censoring is abolished.  By using inverse probability of treatment and censoring weights a hypothetical pseudo-population is created where time-dependent confounders are balanced over time and censoring is abolished. 

In this section we extend KOW to similarly handle informative censoring.
We demonstrate that under sequentially ignorable censoring, minimizing the very same discrepancies as before at each time period, restricted to the units for which data is available, actually controls for both time-dependent confounding as well as informative censoring. Thus, KOW naturally extends to the setting with informative censoring.

Let $C_{it}\in\{0,1\}$ for $t=1,\dots,T$ indicate whether unit $i$ is censored in time period $t$ and let $C_{i0}=0$. 
Note that $C_{it}=1$ implies that $C_{i,t+1}=1$ and that $C_{it}=0$ implies that $C_{i,t-1}=0$. 
All we require is that we (at least) 
observe outcomes $Y_i$ whenever $C_{iT}=0$,
$X_{it}$ whenever $C_{i,t-1}=0$,
and $A_{it}$ whenever $C_{it}=0$.
% Let $C_i\in\{1,\dots,T+1\}$ indicate the time point at which unit $i$ is censored, where $C_i=T+1$ indicates that the unit is never censored. 
% All we require is that we observe outcomes $Y_i$ whenever $C_i=T+1$
% and that we (at least) observe $X_{it}$ whenever $C_i\geq t$
% and $A_{it}$ whenever $C_i>t$.
Note we might observe more, such as the treatment at time $t$ 
for a unit with $C_{i,t-1}=0$, or perhaps only some of the data after censoring is corrupted, but that is not required. 
% Note also that $X_1$ is assumed to be observed
We summarize the assumption of sequentially ignorable censoring as
% treatment and censoring as
\begin{equation}\label{ignocens}
Y(\overline a)\independent \overline C_{t} \mid \overline A_{t},\overline X_{t}.
% (A_{it},C_i>t)\mid \overline A_{i,t-1},\overline X_{it}
\end{equation}
% \textcolor{black}{where we dropped the subscript $i$ to denote a generic unit from the population. } 

Let us redefine
\begin{align}
\label{deltasc}
\delta^{(1)}_{a_1}(W,h^{(1)}) &= \mathbbm{E} \left[ W \mathbbm{1}[A_1=a_1]\I[C_1=0] h^{(1)}(X_1) \right] - \mathbbm{E} \left[ h^{(1)}(X_1) \right] \\\notag
g_{\overline a}^{(1)}(X_1) &=  \mathbbm{E} \left[ Y(\overline a) \mid X_1 \right],\\\notag
% \end{align}
% \begin{align}
% \label{deltasct}
\delta^{(t)}_{a_t}(W,h^{(t)}) &= \mathbbm{E} \left[ W \mathbbm{1}[A_t=a_t]\I[C_t=0] h^{(t)}(\overline A_{t-1},\overline X_t) \right] \\ \notag &\phantom{=}- \mathbbm{E} \left[ W\I[C_{t-1}=0] h^{(t)}(\overline A_{t-1},\overline X_t) \right],&&\forall t\geq2, \\\notag
g_{\overline a}^{(t)}(\overline A_{t-1},\overline X_t) &=  \mathbbm{1}[\overline A_{t-1}=\overline a_{t-1}]\mathbbm{E} \left[ Y(\overline a) \mid \overline A_{t-1},\overline X_t \right],&&\forall t\geq2.
\end{align}

Similarly to Theorem \ref{thm1}, the following theorem shows that we can write the difference between the weighted average outcome
among the  \emph{uncensored} $\overline a$-treated units, $\mathbbm{E} \left[ W \mathbbm{1}[\overline A=\overline a]\mathbbm{1}[C_T=0] Y \right]$, and the true average potential outcome of $\overline a$, $\mathbbm{E} \left[ Y(\overline a) \right]$,  as the sum over time points $t$ of discrepancies involving the values of treatment and confounder histories up to time $t$.
\begin{theorem}
\label{thm3}
Under 
% sequentially ignorable censoring, consistency and 
assumptions \eqref{positivity}--\eqref{igno} and \eqref{ignocens}, 
\begin{equation}\label{decomp_cens}
\mathbbm{E} \left[ W \mathbbm{1}[\overline A=\overline a]\mathbbm{1}[C_T=0] Y \right]
- \mathbbm{E} \left[ Y(\overline a) \right]
=
\sum_{t=1}^T\delta^{(t)}_{a_t}(W,g_{\overline a}^{(t)}).
\end{equation}
\end{theorem}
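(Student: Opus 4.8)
The plan is to mirror the telescoping decomposition used for Theorem~\ref{thm1}, now carrying the censoring selection indicators through every step and invoking sequentially ignorable censoring \eqref{ignocens} wherever the proof of Theorem~\ref{thm1} invoked sequential ignorability of treatment \eqref{igno}. Concretely, starting from consistency I would write $\E[W\I[\overline A=\overline a]\I[C_T=0]Y]=\E[W\I[\overline A=\overline a]\I[C_T=0]Y(\overline a)]$ and then condition on the full observed history $(\overline A_T,\overline X_T,\overline C_T)$, with respect to which the multiplier $W\I[\overline A=\overline a]\I[C_T=0]$ is measurable. Sequentially ignorable censoring \eqref{ignocens} then removes $\overline C_T$ from the inner conditional mean, $\E[Y(\overline a)\mid\overline A_T,\overline X_T,\overline C_T]=\E[Y(\overline a)\mid\overline A_T,\overline X_T]$, and one application of \eqref{igno} drops $A_T$, leaving the running quantity $S_T=\E[W\I[\overline A_T=\overline a_T]\I[C_T=0]\E[Y(\overline a)\mid\overline A_{T-1},\overline X_T]]$.

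I would then establish the recursion $S_t=\delta^{(t)}_{a_t}(W,g^{(t)}_{\overline a})+S_{t-1}$ for $t\geq2$, where $S_t=\E[W\I[\overline A_t=\overline a_t]\I[C_t=0]\E[Y(\overline a)\mid\overline A_{t-1},\overline X_t]]$. By construction $S_t$ is exactly the first term of $\delta^{(t)}_{a_t}(W,g^{(t)}_{\overline a})$ in the censoring definitions \eqref{deltasc}, since $\I[\overline A_t=\overline a_t]=\I[A_t=a_t]\I[\overline A_{t-1}=\overline a_{t-1}]$ and $g^{(t)}_{\overline a}$ carries the factor $\I[\overline A_{t-1}=\overline a_{t-1}]$. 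Subtracting $\delta^{(t)}_{a_t}$ leaves its baseline term $\E[W\I[C_{t-1}=0]g^{(t)}_{\overline a}]=\E[W\I[\overline A_{t-1}=\overline a_{t-1}]\I[C_{t-1}=0]\E[Y(\overline a)\mid\overline A_{t-1},\overline X_t]]$, and the task is to identify this with $S_{t-1}$. This requires reducing $\E[Y(\overline a)\mid\overline A_{t-1},\overline X_t]$ to $\E[Y(\overline a)\mid\overline A_{t-2},\overline X_{t-1}]$: dropping the most recent confounder $X_t$ by the tower property (exactly the iterated-expectation step of Theorem~\ref{thm1}) and dropping $A_{t-1}$ by \eqref{igno}. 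The crucial new bookkeeping is that the selection indicator riding along is now $\I[C_{t-1}=0]$, which is precisely the indicator in the first term of $\delta^{(t-1)}_{a_{t-1}}$; monotonicity of censoring ($C_{t-1}=0$ whenever $C_t=0$, so $\I[C_t=0]$ is a function of the sub-history $\overline C_t$) is what makes these indicators nest consistently and lets \eqref{ignocens} strip censoring from the outcome's conditional mean in the presence of $\I[C_{t-1}=0]$.

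Finally I would treat the base case $t=1$ separately, as in \eqref{deltasc}: the second term of $\delta^{(1)}_{a_1}$ drops both the weight and the censoring indicator (since $C_0\equiv0$), giving $\E[\E[Y(\overline a)\mid X_1]]=\E[Y(\overline a)]$, so that $S_1=\delta^{(1)}_{a_1}(W,g^{(1)}_{\overline a})+\E[Y(\overline a)]$. Telescoping then gives $S_T=\sum_{t=2}^T\delta^{(t)}_{a_t}+S_1=\sum_{t=1}^T\delta^{(t)}_{a_t}+\E[Y(\overline a)]$, and rearranging yields \eqref{decomp_cens}.

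I expect the main obstacle to be the censoring-indicator bookkeeping rather than any new probabilistic idea: one must verify that the placement of $\I[C_t=0]$ in the treated term and $\I[C_{t-1}=0]$ in the baseline term of each discrepancy in \eqref{deltasc} is exactly what makes the running selection indicator at stage $t$ hand off to stage $t-1$, and that each conditional-mean reduction remains valid once a selection indicator multiplies the weight $W$. This is where \eqref{ignocens} and the monotonicity of censoring do the real work; modulo these two points, every manipulation is identical to the corresponding step in the proof of Theorem~\ref{thm1}.
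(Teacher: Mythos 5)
Your proposal is correct and is essentially the paper's own argument: the recursion $S_t=\delta^{(t)}_{a_t}(W,g^{(t)}_{\overline a})+S_{t-1}$ is exactly the paper's $T=2$ telescoping proof (extended by induction), using the same sequence of consistency, iterated expectations, \eqref{igno}, \eqref{ignocens}, and the censored discrepancy definitions \eqref{deltasc}. The bookkeeping obstacle you flag is resolved in the paper precisely as you anticipate: before each tower-property reduction it re-inserts $A_t$ and the censoring history into the conditioning set via \eqref{igno} and \eqref{ignocens} (so that the conditioning contains $C_{t-1}$, matching the multiplier $\I[C_{t-1}=0]$), towers down, and then strips $C_{t-1}$ by \eqref{ignocens} and $A_{t-1}$ by \eqref{igno}, with the base case handled by $C_0\equiv 0$ just as you describe.
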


% The empirical counterparts to $\delta^{(t)}_{a_t}(W,h^{(t)})$ are the sample moment discrepancies for a given set of sample weights $W_{1:n}$,

% \begin{align*}
% % \label{deltasc}
% \hat\delta^{(1)}_{a_1}(W_{1:n},h^{(1)}) &= \frac1n\sum_{i=1}^n (W_i\mathbbm{1}[A_{i1}=a_1]\I[C_{i1}=0]-1) h^{(1)}(X_1)\\\notag
% \hat\delta^{(t)}_{a_t}(W_{1:n},h^{(t)}) &= \frac1n\sum_{i=1}^n (W_i\mathbbm{1}[A_{it}=a_t]\I[C_{it}=0]\\\notag &\phantom{=}-W_i\I[C_{i,t-1}=0]) h^{(t)}(\overline A_{i,t-1},\overline X_{it})\\\notag
% \hat{\overline\delta}_{\overline a}(W_{1:n},\overline h)&=\sum_{t=1}^T\hat\delta^{(t)}_{a_t}(W_{1:n},h^{(t)}).
% \end{align*}

%Note that, aside from the function $g_{\overline a}^{(t)}$, all the discrepancies in eq.~\eqref{deltasc} involve \emph{only} observed, uncensored quantities. 
%Thus, we can write the difference between the weighted average outcome among the \emph{uncensored} $\overline a$-treated units, which are the only units where we observe the outcome of $\overline a$, and the average potential outcome of $\overline a$ among everyone, as the sum of discrepancies in \emph{uncensored} treatment and confounder histories.
We then define 
the empirical counterparts to $\delta^{(t)}_{a_t}(W,h^{(t)})$ as before in eq.~\eqref{deltase} but limit ourselves to \textit{uncensored} units, as in eq.~\eqref{deltasc}.
% but, following eq.~\eqref{deltasc}, limit to uncesnsored units.
We similarly define imbalance,
$\text{IMB}(W_{1:n};(\overline g_{\overline a}^{(t)})_{\overline a\in \mathcal A})$,
and the worst case imbalance $\mathcal{B}^2(W_{1:n})$,
as before in eqs.~\eqref{biasb2} and \eqref{nswcimb}.
% but limit 
% using the above new definitions for $\hat\delta_{a_t}^{(t)}$.
Finally, again using kernels to specify norms, we obtain weights that optimally balance covariates to control for time-dependent confounding and account for informative censoring while controlling precision by solving the quadratic optimization problem, 
\begin{equation}
\label{QP_C}
	\begin{aligned}
\underset{W_{1:n} \in \mathcal{W}}{\min} \quad & 
% \frac1{n^2}(
\frac12W_{1:n}^TK_{\lambda}^\circ W_{1:n}-e^TK_{\lambda}W_{1:n}
% +e^TK_{\lambda}e)
,
	\end{aligned}
\end{equation}
%\textcolor{blue}{@MS: Are these kernel matrices correct? Check and if not fix!}
where 
$K_{\lambda}^\circ = K^\circ+2\lambda I$, 
$K_{\lambda}=K_1+2\lambda I$, 
$K^\circ = \sum_{t=1}^TK_t^\circ$,  
\break $K_t^\circ=\sum_{a_t \in \lbrace 0,1 \rbrace} I^{(t)}_{a_t}K_tI^{(t)}_{a_t}$, 
% $K_t^\circ=\sum_{a_t\in \lbrace 0,1 \rbrace} (I^{(t)}_{a_t}-I^{(t)}_\text{cens})K_t(I^{(t)}_{a_t}-I^{(t)}_\text{cens})$ for $t\geq2$, 
$K_t\in\mathbb R^{n\times n}$ is a semidefinite positive matrix defined as $K_{tij}=\mathcal K_t((\overline A_{i,t-1},\overline X_{it}),(\overline A_{j,t-1},\overline X_{jt}))$, 
$I^{(t)}_{a_t}$ is the diagonal matrix with $\mathbb I[A_{it}=a_t]\mathbb I[C_{it}=0]-\mathbb I[C_{i,t-1}=0]$ in its $i^\text{th}$ diagonal entry (recall $C_{i,0}=0$ for all $i$), 
% $I^{(t)}_{a_t}$ is the diagonal matrix with $\mathbb I[A_{it}=a_t]\mathbb I[C_{it}=0]$ in its $i^\text{th}$ diagonal entry, $I^{(t)}_\text{cens}$ is the diagonal matrix with $\mathbb I[C_{i,t-1}=0]$ in its $i^\text{th}$ diagonal entry, 
% $\lambda = \sum_{t=1}^T \lambda_t$, 
and $e$ is the vector of all ones. 

%$K_t^\circ={I^{(t)}_{0,0}K_tI^{(t)}_{0,0}+I^{(t)}_{0,1}K_tI^{(t)}_{0,1}+I^{(t)}_{1,0}K_tI^{(t)}_{1,0}+I^{(t)}_{1,1}K_tI^{(t)}_{1,1}}$ for $t=1$, and $K_t^\circ=(I^{(t)}_{0,0}-I)K_t(I^{(t)}_{0,0}-I)+(I^{(t)}_{0,1}-I)K_t(I^{(t)}_{0,1}-I)+(I^{(t)}_{1,0}-I)K_t(I^{(t)}_{1,0}-I)+(I^{(t)}_{1,1}-I)K_t(I^{(t)}_{1,1}-I)$

\section{Applications}\label{empirics}

In this section, we present two empirical applications of KOW. In the first, we estimate the effect of treatment initiation on time to death among people living with HIV (PLWH). In the second, we evaluate the impact of negative advertising on election outcomes.

\subsection{The effect of HIV treatment on time to death}
\label{casehiv}

In this section, we analyze data from the Multicenter AIDS Cohort Study (MACS) to study the effect of the initiation time of treatment on time to death among PLWH. Indeed, due to the longitudinal nature of HIV treatment and the presence of time-dependent confounding, MSMs have been widely used to study causal effects in this domain
% to estimate the causal effect of a time-varying treatment on time to death among PLWH in the presence of time-dependent confounding  
\citep[among others]{hernan2000marginal, hernan2001marginal, hiv2010effect, hiv2011initiate, lodi2017effect}.
As an example of time-dependent confounding, CD4 cell count, a measurement used to monitor immune defenses in PLWH and to make clinical decisions, is a predictor of both treatment initiation and survival, as well as being itself influenced by prior treatments. 
Recognizing the censoring in the MACS data,
\cite{hernan2000marginal} showed how to estimate the parameters of the MSM by inverse probability of treatment and censoring weighting (IPTCW). 

Here, we apply KOW as proposed in Section \ref{censor} to handle both time-dependent confounding and informative censoring while controlling precision. We considered the following potential time-dependent confounders associated with the effect of treatment initiation and the risk of death: CD4 cell count, white blood cell count, red blood cell count, and platelets. We also identified the age at baseline as a potential time-invariant confounding factor.  
% We used the same confounders to control for informative censoring. 
We considered only recently developed HIV treatments, thus, including in the analysis only PLWH that started treatment after 2001. The final sample was comprised of a total of $n=344$ people and 760 visits, with a maximum of $T=8$ visits per person. We considered two sets of KOW weights, either obtained by using a product of (1) two linear kernels, one for the treatment history and one for the confounder history ($\mathcal{K}_1$) or (2) a linear kernel for the treatment history and a polynomial kernel of degree 2 for the confounder history ($\mathcal{K}_2$). We scaled the covariates related to the treatment and confounder history, and tuned the kernels' hyperparameters and the penalization parameter by using Gaussian processes marginal likelihood as presented in Section \ref{guidelines}.   Following previous approaches studying the HIV treatment using IPTCW that modeled treatment and censoring using single time lags \citep{hernan2000marginal,hernan2001marginal,hernan2002estimating}, we included in each kernel the time-invariant confounders, the previous treatment, $A_{t-1}$, and the time-dependent confounders at time $t$, $X_t$, instead of the entire histories. 
As described in Section \ref{guidelines}, since we have repeated observations of outcomes, we compute a set of KOW weights by solving the optimization problem (\ref{QP_C}) for each horizon up to $T$.
In addition, as described in Section \ref{guidelines}, we constrained the mean of the weights to be equal to one. 

We compared the results obtained by KOW with those from IPTCW and stabilized-IPTCW (sIPTCW). 
The latter sets of weights 
were obtained by using a logistic regression on the treatment history and the aforementioned time-invariant and time-dependent confounders and using only one time lag for each of the treatment and time-dependent confounders
as done in previous approaches studying the HIV treatment using IPTCW \citep{hernan2000marginal,hernan2001marginal,hernan2002estimating}. 
% the IPTCW
The numerator of sIPTCW was computed 
% the set of stabilized weights 
by modeling $h(\overline{A}_t)$ in eq.~(\ref{sipweights}) with a logistic regression on the treatment history only using one time lag. 
We modeled the inverse probability of censoring weights similarly. 
The final sets of IPTCW and sIPTCW weights were obtained by multiplying inverse probability of treatment and censoring weights. 
% and stacking the computed sets of weights into a $n \times T$ matrix of weights. 
% Following previous approaches studying the MACS dataset using IPTCW \citep{hernan2000marginal,hernan2001marginal,hernan2002estimating}, 
% Following \citet{hernan2000marginal,hernan2001marginal,hernan2002estimating},
% we included in each kernel the time-invariant confounders, one-time-lagged treatment, $A_{t-1}$, and the time-dependent confounders at time $t$, $X_t$, and not the entire histories. 
% 
We did not compare the results with those of CBPS
% . This is 
because it does not handle informative censoring. In particular, CBPS requires a complete $n \times T$ matrix of observed time-dependent confounders, while in the MACS dataset many entries are missing.  

We estimated the hazard ratio of the risk of death by using a weighted Cox regression model \citep{hernan2000marginal} weighted by KOW, IPTCW, or sIPTCW and using robust standard errors \citep{freedman2006so}. We used \textsf{Gurobi} and its \textsf{R} interface to solve eq.~\eqref{QP_C} and obtain the KOW weights, the \textsf{Matlab} package \textsf{GPML} to perform the marginal likelihood estimation of hyperparameters, {the \textsf{R} package \textsf{R.matlab} to call \textsf{MatLab} from within \textsf{R},} the \textsf{R} package \textsf{ipw} \citep{van2011ipw} to fit the treatment models for IPTCW and sIPTCW, and the \textsf{R} command \textsf{coxph} (with robust variance estimation) to fit the outcome model. It took 13.5 seconds to obtain a solution for KOW.  Table \ref{table_hiv} summarizes the result of our analysis. Both KOW ($\mathcal{K}_1$) and ($\mathcal{K}_2$) showed a significant protective effect of HIV treatment on time to death among PLWH. IPTCW showed a similar effect but with lower precision, resulting in a non-significant effect. With similar precision obtained by KOW, sIPTCW showed a non-significant effect of HIV treatment on time to death.  
% \textcolor{red}
{Whereas analyses based on IPTCW and sIPTCW lead to non-significant and inconsistent conclusions, the results we obtained by using KOW show that  PLWH can benefit from HIV treatment, as shown in independent
randomized placebo-controlled trials \citep{cameron1998randomised,hammer1997controlled}. }

%CBPS provides a vector of size equal to the sample size, while the actual analysis requires a vector of weights with size equal to the total number of visits. 

\begin{table}[H]
\centering
\caption{Effect of HIV treatment on time to death.}
\label{table_hiv}
\begin{threeparttable}
\begin{tabular}{ccccc}
\hline

 &         \multicolumn{2}{c}{KOW}                         & \multicolumn{2}{c}{Logistic}                                                        \\
& \multicolumn{1}{c}{$\mathcal{K}_1$} & \multicolumn{1}{c}{$\mathcal{K}_2$} & \multicolumn{1}{c}{IPTCW} & \multicolumn{1}{c}{sIPTCW} \\

\hline
\textit{$\hat{HR}$ } & 0.40* & 0.48*  & 0.14 & 1.25   \\ 
\textit{ SE } & (0.30) & (0.28) & (1.15)  & (0.30) 
\end{tabular}
\begin{tablenotes}
      \footnotesize
      \item Note: $\hat{HR}$ is the estimated hazard ratio of the effect of HIV treatment initiation on time to death. $SE$ is the estimated robust standard error. Weights were obtained by using, KOW  ($\mathcal{K}_1$): a product of two linear kernels, one for the treatment history and one for the confounder history; KOW ($\mathcal{K}_2$): a product between a linear kernel for the treatment history and a polynomial kernel of degree 2 for the confounder history; IPTCW:  a logistic regression on the treatment history and the time-invariant and time-dependent confounders (using only one time lag for each of the treatment and time-dependent confounders); sIPTCW: stabilized IPTCW.  * indicates statistical significance at the 0.05 level.
    \end{tablenotes}
\end{threeparttable}
\end{table}

%CBPS Full: by using CBPS with optimal covariance matrix. CBPS Approx: by using CBPS with low-rank approximation.

\subsection{The impact of negative advertising on election outcomes}
\label{caseblack}
% We follow the problem as specified by \cite{imai2015robust}

In this section, we analyze a subset of the dataset from \cite{blackwell2013framework} to estimate the impact of negative advertising on election outcomes.
Because of the dynamic and longitudinal nature of the problem and presence of time-dependent confounders, MSMs have been used previously used to study the question 
\citep{blackwell2013framework}. 
Specifically, poll numbers are time-dependent confounders as they might both be affected by negative advertising and might also affect future poll numbers. 
We constructed the subset
of the data from \citet{blackwell2013framework} by considering
the five weeks leading up to each of 114 elections held 2000--2006 (58 US Senate, 56 US gubernatorial). 
Differently from Section \ref{casehiv} in which the outcome was observed at each time period, in this analysis, the binary election outcome was observed only at the end of each five-week trajectory. 
In addition, all units were uncensored.

We estimated the parameters of two MSMs, the first having separate coefficients for negative advertising in each time period and the second having one coefficient for the cumulative effect of negative advertising. 
Each MSM was fit using weights given by each of KOW, IPTW, sIPTW, and CBPS (both full and approximate).
We used the following time-dependent confounders: Democratic share of the polls, proportion of undecided voters, and campaign length. We also used the following time-invariant confounders: baseline Democratic vote share, proportion of undecided voters, status of incumbency, election year and type of office. 
% Following \citet{imai2015robust}, we considered models based on the full confounder history and two time lags of treatment.
We obtained two sets of KOW weights by using a product of (1) two linear kernels, one for the history of negative advertising and one for the confounder history ($\mathcal{K}_1$) and (2) a linear kernel for the history of negative advertising and a polynomial kernel of degree 2 for the confounder history ($\mathcal{K}_2$).
The kernels were over the complete confounder history up to time $t$, 
$\overline X_t$, and two time-lags of treatment history, $A_{t-1},A_{t-2}$.
We scaled the covariates and tuned the kernels' hyperparameters and the penalization parameter by using Gaussian processes marginal likelihood. We obtained the final set of KOW weights by solving eq.~(\ref{QP}). 
We compared the results obtained by KOW with those from IPTW, sIPTW, CBPS-full, and CBPS-approx. To obtain the sets of IPTW, sIPTW, and CBPS weights, we used logistic models conditioned on the confounder history and two time-lags from the treatment history. To compute the numerator of sIPTW weights, we used a logistic regression conditioned only on two time-lags from the treatment history. 
% To obtain the set of CBPS-weights we considered the aforementioned time-invariant and time-dependent confounders. 
% We report robust standard errors \citep{freedman2006so}. 
We used \textsf{Gurobi} and its \textsf{R} interface to solve eq.~\eqref{QP_C} and obtain the KOW weights, the \textsf{Matlab} package \textsf{GPML} to perform the marginal likelihood estimation of hyperparameters, {the \textsf{R} package \textsf{R.matlab} to call \textsf{MatLab} from within \textsf{R},} the \textsf{R} command \textsf{glm} to fit the treatment models for IPTW and sIPTW, the \textsf{R} package \textsf{CBMSM} for CBPS, the \textsf{R} command \textsf{lm}  to fit the outcome model, and the \textsf{R} package \textsf{sandwich} to estimate robust standard errors. The computational time to obtain a solution was equal to 12.6 seconds for KOW, while it was equal to 104 seconds for CBPS-full and 3.8 seconds for CBPS-approx.

Table \ref{table_black} summarizes the results of our analysis, reporting robust standard errors \citep{freedman2006so}. 
The first six rows of Table \ref{table_black} show the effect of the time-specific negative advertising. The last two rows present the effect of the cumulative effect of negative advertising. KOW ($\mathcal{K}_1$ and $\mathcal{K}_2$) and IPTW showed similar effects, with increased precision when using KOW except for time 4, in which both methods showed a significant negative effect but with greater precision when using IPTW.  sIPTW, CBPS-full and CBPS-approx showed a significant negative effect at time 3 with similar precision. No significant results were obtained when considering the cumulative effect of negative advertising. All except sIPTW, showed a negative cumulative effect. KOW ($\mathcal{K}_1$) was the most precise. 
% \textcolor{red}
{We conclude that, the impact of negative advertising in the majority of the time periods and its cumulative effect on election outcomes are not statistically significant.}

\begin{table}[H]
\centering
\caption{Impact of negative advertising on election outcomes.}
\label{table_black}
\begin{threeparttable}
\begin{tabular}{ccccccc}
\hline
 \multicolumn{1}{c}{$\hat \beta$}     &         \multicolumn{2}{c}{KOW}                         & \multicolumn{2}{c}{Logistic}                            & \multicolumn{2}{c}{CBPS}                                  \\
\multicolumn{1}{c}{$SE$} & \multicolumn{1}{c}{$\mathcal{K}_1$} & \multicolumn{1}{c}{$\mathcal{K}_2$} & \multicolumn{1}{c}{IPTW} & \multicolumn{1}{c}{sIPTW} & \multicolumn{1}{c}{Full} & \multicolumn{1}{c}{Approx} \\
\hline
Intercept   			& 54.54*      	&  53.84*    	& 53.05*  	& 47.46*  	&   51.25* 	&   52.17* \\
							& (2.15) 		& (2.38) 		&  (2.88) & (2.98) 	& (2.70) 				& (2.39)  \\
Negative$_1$ 		& 2.43        	&   3.27  		& 4.41  	& 7.62*  	&   5.95*	 	&  4.81* \\
							& (1.86) 		& (1.86)		&  (2.56) & (3.26) 	& (2.49) 				& (2.22) \\
Negative$_2$ 		& 3.73     	&  3.24       & 5.51*  	& 3.17  	&   3.55 	&   2.65 \\
							& (2.18) 		& (2.22)		&  (2.38) & (3.19) 	& (2.73) 				& (2.42) \\
Negative$_3$ 		& -2.51    	&  -2.39       & -4.37  	& -8.32*  	&   -6.50* 	&   -6.31* \\
							& (2.34) 		& (2.45)		&  (2.54) & (3.84) 	& (3.20) 				& (3.24) \\
Negative$_4$ 		& -7.16*      &   -7.22*    & -8.77*  & 2.34  	&   -3.55 	&   -3.12 \\
							& (2.57) 		& (2.75) 		&  (1.54) & (3.11) 	& (3.71) 				& (3.59) \\
Negative$_5$ 		& -2.75*      & -1.79       	& -3.19  	& -3.62  	&   -1.92 	& -1.96   \\
							& (1.42) 		& (1.59) 		&  (2.19) & (2.59) 	& (1.62) 				& (1.54) \\
							& 		& 	&   & 	& 				&  \\
\hline
Intercept   			& 51.40*     &  50.56*       & 58.29*  & 42.63*  	&   49.38* 	& 50.28*   \\
							& (2.45) 		& (2.63)  	&  (4.29) & (4.15) 			& (2.68) 				& (2.49)\\
Cumulative  			& -0.59     	&  -0.37       & -0.93  	& 1.91  	&   -0.28 	&   -0.41 \\
							& (0.58)  	& (0.64)  	&  (1.57) & (1.15) 			& (0.65) 				& (0.77) \\
\hline
\end{tabular}
\begin{tablenotes}
      \footnotesize
      \item Note: $\hat{\beta}$ is the estimated effect of negative advertising. $SE$ is the estimated robust standard error. Weights were obtained by using, KOW  ($\mathcal{K}_1$): a product of two linear kernels, one for the history of negative advertising and one for the confounder history; KOW ($\mathcal{K}_2$): a product between a linear kernel for the history of negative advertising and a polynomial kernel of degree 2 for the confounder history; IPTW: a logistic model conditioned on the confounder history and two time-lags from the treatment history; sIPTW: stabilized IPTW; CBPS-full: CBPS with full covariance matrix; CBPS-approx: CBPS with low-rank approximation. * indicates statistical significance at the 0.05 level.
    \end{tablenotes}
    \end{threeparttable}
\end{table}

\section{Conclusions}\label{conclusions}

%, which provides more robust estimates of the causal effect, , which provides more precise inferences
%Unlike IPTW-based methods and CBPS, the proposed methodology simultaneously improves covariate balance and controls for extreme weights. 

In this paper we presented KOW, which optimally finds weights for fitting an MSM with the aim of balancing time-dependent confounders while controlling for precision. That KOW uses mathematical optimization to directly and fully balance covariates %to minimize error due to imbalances
as well as optimize precision
explains the better performance of KOW over IPTW, sIPTW and CBPS observed in our simulation study. 
% \textcolor{red}
{In addition, as shown in Sections \ref{nswci}, \ref{simu} and \ref{censor}, the proposed methodology only needs to minimize a number of discrepancies that grows linearly in the number of time periods, mitigates the possible misspecification of the treatment assignment model, allows balancing non-additive covariate relationships, and can be extended to control for informative censoring, which is a common feature of longitudinal studies.}  

Alternative formulations of our imbalance-precision optimization problem, eq.~\eqref{cmse}, may be investigated. For example, additional linear constraints can be added to the optimization problem, as shown in the empirical application of Section \ref{casehiv}, and different penalties can be considered to control for extreme weights. For instance, in eq.~\eqref{cmse}, at the cost of no longer being able to use convex-quadratic optimization, one may directly penalize the covariance matrix of the weighted least-square estimator rather than use a convex-quadratic surrogate as we do. 
% \todo{\tiny{M: error KOW weights?}}
% or that of the two-step estimator \citep{carroll1988effect,murphy2002estimation,zou2016variance}

One may also change the nature of precision control. Here, we suggested penalization in an attempt to target total error.
Alternatively, similar to \cite{santacatterina2017optimal}, we may reformulate eq.~\eqref{cmse} as a constrained optimization problem where the precision of the resulting estimator is constrained by an upper bound $\xi$, thus seeking to minimize imbalances subject to having a bounded precision. In our convex formulation, the two are equivalent by Lagrangian duality in that for every precision penalization $\lambda$ there is an equivalent precision bound $\xi$. However, it may make specifying the parameters easier depending on the application as it may be easier for a practitioner to conceive of a desirable bound on precision. There may also be other ways to choose the penalization parameter. Here we suggested using maximum marginal likelihood but cross validation based on predicting outcomes and their partial means may also be possible. 

The flexibility of our approach is that any of these changes amount to simply modifying the optimization problem that is fed to an off-the-shelf solver. Indeed, we were able to extend KOW from the standard longitudinal setting to also handle both repeated observations of outcomes and informative censoring. In addition to offering flexibility, the optimization approach we took, which directly and fully minimized our error objective phrased in terms of covariate imbalances, was able to offer improvements on the state of the art.

\bibliographystyle{chicago}
\bibliography{bib}

% AOS,AOAS: If there are supplements please fill:
%\begin{supplement}[id=suppA]
%  \sname{Appendix}
%  \stitle{Appendix}
%  \slink[doi]{10.1214/00-AOASXXXXSUPP}
%  \sdatatype{.pdf} 
%  \sdescription{Proofs of eqs.~\eqref{deco}, \eqref{decomp_cens}, %\eqref{Delta1}, and \eqref{Delta2}.}
%\end{supplement}

\newpage

%\begin{supplement}[id=suppA]
  \section*{Appendix}

\begin{proof}[Proof of Theorem~\ref{thm1}]
\label{proof_deco}
For clarity, we prove this for $T=2$. The extension to $T>2$ is by induction.
Under consistency and assumptions \eqref{positivity}--\eqref{igno}, we have
\begin{align*}
&\E[W\I[\overline A=\overline a]Y]\\
&=\E[W\I[A_1=a_1]\I[A_2=a_2]Y(\overline a)]&&\text{(consistency)}\\
&=\E[W\I[A_1=a_1]\I[A_2=a_2]\E[Y(\overline a)\mid A_1,A_2,X_1,X_2]]&&\text{(iterated expectations)}\\
&=\E[W\I[A_1=a_1]\I[A_2=a_2]\E[Y(\overline a)\mid A_1,X_1,X_2]]&&\text{(sequential ignorability)}\\
&=\E[W\I[A_1=a_1]\E[Y(\overline a)\mid A_1,X_1,X_2]]+\delta^{(2)}_{\overline a}(W,g_{\overline a}^{(2)})&&\text{(definition of $\delta^{(2)}_{\overline a},g_{\overline a}^{(2)}$)}\\
&=\E[W\I[A_1=a_1]\E[Y(\overline a)\mid A_1,X_1]]+\delta^{(2)}_{\overline a}(W,g_{\overline a}^{(2)})&&\text{(iterated expectations)}\\
&=\E[W\I[A_1=a_1]\E[Y(\overline a)\mid X_1]]+\delta^{(2)}_{\overline a}(W,g_{\overline a}^{(2)})&&\text{(sequential ignorability)}\\
&=\E[\E[Y(\overline a)\mid X_1]]+\delta^{(1)}_{\overline a}(W,g_{\overline a}^{(1)})+\delta^{(2)}_{\overline a}(W,g_{\overline a}^{(2)})&&\text{(definition of $\delta^{(1)}_{\overline a},g_{\overline a}^{(1)}$)}\\
&=\E[Y(\overline a)]+\delta^{(1)}_{\overline a}(W,g_{\overline a}^{(1)})+\delta^{(2)}_{\overline a}(W,g_{\overline a}^{(2)})&&\text{(iterated expectations)}
\end{align*}

\end{proof}

%\newpage

\newpage

\begin{proof}[Proof of Thm. \ref{thm2}]
\label{proof_QP}
Define $K_{tij}=\mathcal K_t((\overline A_{i,t-1},\overline X_{it}),(\overline A_{j,t-1},\overline X_{jt}))$. Then, by the representer property of the kernels and by self-duality of Hilbert spaces,

\begin{equation}
	\begin{aligned}
			\Delta^{(1)}_{a_1}(W_{1:n})^2 &= \sup_{\|h^{(1)}\|_{(1)}^2\leq 1} \left( \hat\delta^{(1)}_{a_1}(W_{1:n},h^{(1)}) \right)^2			\\
			&= \text{sup}_{\| h^{(1)} \|^2_{(1)}  \leq 1}  \left(  \dfrac{1}{n} \sum_{i=1}^n \underbrace{\left(  W_i\mathbbm{1}[A_t=a_t] - 1 \right)}_{z_i}h^{(1)}(X_{i1}) \right)^2 \\
 			&= \text{sup}_{ \|h^{(1)} \|^2_{(1)}  \leq 1} \left( \dfrac{1}{n}\sum_{i=1}^n z_i \langle \mathcal{K}_{t}((X_{i1}),\cdot), h^{(1)}(X_{i1})  \rangle \right)^2 \\
&=  \left\| \dfrac{1}{n}\sum_{i=1}^n z_i  \mathcal{K}_{1}((X_{i1}),\cdot) \right\|_{(1)}^2 \\ 
&=  \left\langle \dfrac{1}{n}\sum_{i=1}^n z_i  \mathcal{K}_{t}((X_{i1}),\cdot),  \dfrac{1}{n}\sum_{i=1}^n z_i  \mathcal{K}_{t}((X_{i1}),\cdot) \right\rangle \\
&= \dfrac{1}{n^2}\sum_{i=1}^n\sum_{j=1}^n z_i z_j  \mathcal{K}_{t}((X_{i1}),(X_{j1})) \\\notag
&=\frac1{n^2}\sum_{i=1}^n\sum_{j=1}^n(W_i\mathbb I[A_{i1}=a_1]-1)(W_j\mathbb I[A_{j1}=a_1]-1)K_{tij}
\\\notag
&=\frac1{n^2}(I^{(1)}_{a_1}W_{1:n}-e)^TK_t(I^{(1)}_{a_1}W_{1:n}-e)\\\notag \\\notag
&=\frac1{n^2}W_{1:n}^TI^{(1)}_{a_1}K_1I^{(1)}_{a_1}W_{1:n}-2e^TK_1I^{(1)}_{a_1}W_{1:n}+e^TK_1e
% \\
% &
% =\frac1{n^2}W^T( (I^{(t)}_{a_t}-I) K_t (I^{(t)}_{a_t}-I) )W.
%&=\frac1{n^2}W_{1:n}^TI^{(t)}_{a_t} K_t I^{(t)}_{a_t} W_{1:n}
  	\end{aligned}
\end{equation}

\begin{equation}
	\begin{aligned}
			\Delta^{(t)}_{a_t}(W_{1:n})^2 &= \sup_{\|h^{(t)}\|_{(t)}^2\leq 1} \left( \hat\delta^{(t)}_{a_t}(W_{1:n},h^{(t)}) \right)^2			\\
			&= \text{sup}_{\| h^{(t)} \|^2_{(t)}  \leq 1}  \left(  \dfrac{1}{n} \sum_{i=1}^n \underbrace{\left(  \mathbbm{1}[A_t=a_t] - 1 \right)W_i}_{z_i}h^{(t)}(\overline{A}_{i,t-1},\overline{X}_{it}) \right)^2 \\
 			&= \text{sup}_{ \|h^{(t)} \|^2_{(t)}  \leq 1} \left( \dfrac{1}{n}\sum_{i=1}^n z_i \langle \mathcal{K}_{t}((\overline{A}_{i,t-1},\overline{X}_{it}),\cdot), h^{(t)}(\overline{A}_{i,t-1},\overline{X}_{it})  \rangle \right)^2 \\
&=  \left\| \dfrac{1}{n}\sum_{i=1}^n z_i  \mathcal{K}_{t}((\overline{A}_{i,t-1},\overline{X}_{it}),\cdot) \right\|_{(t)}^2 \\ 
&=  \left\langle \dfrac{1}{n}\sum_{i=1}^n z_i  \mathcal{K}_{t}((\overline{A}_{i,t-1},\overline{X}_{it}),\cdot),  \dfrac{1}{n}\sum_{i=1}^n z_i  \mathcal{K}_{t}((\overline{A}_{i,t-1},\overline{X}_{it}),\cdot) \right\rangle \\
&= \dfrac{1}{n^2}\sum_{i=1}^n\sum_{j=1}^n z_i z_j  \mathcal{K}_{t}((\overline{A}_{i,t-1},\overline{X}_{it}),(\overline{A}_{j,t-1},\overline{X}_{jt})) \\\notag
&=\frac1{n^2}\sum_{i=1}^n\sum_{j=1}^n(W_i\mathbb I[A_{it}=a_t]-W_i)(W_j\mathbb I[A_{jt}=a_t]-W_j)K_{tij}
\\\notag
&=\frac1{n^2}(I^{(t)}_{a_t}W_{1:n}-W_{1:n})^TK_t(I^{(t)}_{a_t}W_{1:n}-W_{1:n})\\\notag
% \\
% &
% =\frac1{n^2}W^T( (I^{(t)}_{a_t}-I) K_t (I^{(t)}_{a_t}-I) )W.
&=\frac1{n^2}W_{1:n}^T(I-I^{(t)}_{a_t}) K_t (I-I^{(t)}_{a_t}) W_{1:n}
  	\end{aligned}
\end{equation}

\end{proof}
 
%\end{supplement}

\newpage

\begin{proof}[Proof of Theorem~\ref{thm3}]

For clarity, we prove this for $T=2$. The extension to $T>2$ is by induction.
% Let $C_t=\I[C>t]$ be the variable indicating censoring.
Under consistency, assumptions \eqref{positivity}--\eqref{igno}, and assumption \eqref{ignocens},
% consistency and assumptions (\ref{positivity}) and (\ref{ignocens}), we have
\begin{align*}
&\E[W\I[\overline A=\overline a]\I[C_2=0]Y]\\
&=\E[W\I[A_1=a_1]\I[A_2=a_2]\I[C_2=0]Y(\overline a)]&&\text{(consistency)}\\
&=\E[W\I[A_1=a_1]\I[A_2=a_2]\I[C_2=0]\E[Y(\overline a)\mid A_1,A_2,X_1,X_2,C_2,C_1]]&&\text{(iterated expectations)}\\
&=\E[W\I[A_1=a_1]\I[A_2=a_2]\I[C_2=0]\E[Y(\overline a)\mid A_1,A_2,X_1,X_2]]&&\text{(eq.~\eqref{ignocens})}\\
%(sequentially ignorable censoring)}\\
&=\E[W\I[A_1=a_1]\I[A_2=a_2]\I[C_2=0]\E[Y(\overline a)\mid A_1,X_1,X_2]]&&\text{(eq.~\eqref{igno})}\\
% (sequentially ignorable treatment)}\\
&=\E[W\I[A_1=a_1]\I[C_1=0]\E[Y(\overline a)\mid A_1,X_1,X_2]]+\delta^{(2)}_{\overline a}(W,g_{\overline a}^{(2)})&&\text{(definition of $\delta^{(2)}_{\overline a},g_{\overline a}^{(2)}$)}\\
&=\E[W\I[A_1=a_1]\I[C_1=0]\E[Y(\overline a)\mid A_1,A_2,X_1,X_2]]+\delta^{(2)}_{\overline a}(W,g_{\overline a}^{(2)})&&\text{(eq.~\eqref{igno})}\\
&=\E[W\I[A_1=a_1]\I[C_1=0]\E[Y(\overline a)\mid A_1,A_2,X_1,X_2,C_1,C_2]]+\delta^{(2)}_{\overline a}(W,g_{\overline a}^{(2)})&&\text{(eq.~\eqref{ignocens})}\\
&=\E[W\I[A_1=a_1]\I[C_1=0]\E[Y(\overline a)\mid A_1,X_1,C_1]]+\delta^{(2)}_{\overline a}(W,g_{\overline a}^{(2)})&&\text{(iterated expectations)}\\
&=\E[W\I[A_1=a_1]\I[C_1=0]\E[Y(\overline a)\mid A_1,X_1]]+\delta^{(2)}_{\overline a}(W,g_{\overline a}^{(2)})&&\text{(eq.~\eqref{ignocens})}\\
&=\E[W\I[A_1=a_1]\I[C_1=0]\E[Y(\overline a)\mid X_1]]+\delta^{(2)}_{\overline a}(W,g_{\overline a}^{(2)})&&\text{(eq.~\eqref{igno})}\\
&=\E[\I[C_0=0]\E[Y(\overline a)\mid X_1]]+\delta^{(1)}_{\overline a}(W,g_{\overline a}^{(1)})+\delta^{(2)}_{\overline a}(W,g_{\overline a}^{(2)})&&\text{(definition of $\delta^{(1)}_{\overline a},g_{\overline a}^{(1)}$)}\\
&=\E[Y(\overline a)]+\delta^{(1)}_{\overline a}(W,g_{\overline a}^{(1)})+\delta^{(2)}_{\overline a}(W,g_{\overline a}^{(2)})&&\text{(iterated expectations)}
\end{align*}

\end{proof}

\end{document}